\newtheorem{theorem}{Theorem}
\theoremstyle{definition}
\newtheorem{definition}{Definition}
\newcommand{\blind}{0}
\date{}
\begin{document}

\def\spacingset#1{\renewcommand{\baselinestretch}%
{#1}\small\normalsize} \spacingset{1}


\if0\blind
{
  \title{\bf Local inhomogeneous 
  weighted 
  summary statistics for marked point processes 
  
  }
  \author{Nicoletta D'Angelo, Giada Adelfio\hspace{.2cm}\\
    Department of Businnes, Economics and Statistics, \\ University of Palermo, Palermo, Italy\\
    and \\
    Jorge Mateu \\
    Department of Mathematics,\\ University Jaume I, Castellon, Spain\\
    and \\
    Ottmar Cronie\thanks{
   Corresponding author; email: ottmar@chalmers.se} \\
    Department of Mathematical Sciences,\\
Chalmers University of Technology and University of Gothenburg,\\ Gothenburg, Sweden}
  \maketitle
} \fi

\if1\blind
{
  \bigskip
  \bigskip
  \bigskip
  \begin{center}
    {\LARGE\bf Local t-weighted higher order summary statistics for functional marked point processes}
\end{center}
  \medskip
} \fi

\bigskip
\begin{abstract}
We introduce a family of local inhomogeneous mark-weighted summary statistics, of order two and higher, for general marked point processes. Depending on how the involved weight function is specified, these summary statistics capture different kinds of local dependence structures. We first derive some basic properties and show how these new statistical tools can be used to construct most existing summary statistics for (marked) point processes. We then propose a local test of random labelling. This procedure allows us to identify points, and consequently regions, where the random labelling assumption does not hold, e.g.~when the (functional) marks are spatially dependent. 
Through a simulation study we show that the test is able to detect local deviations from random labelling. 
We also provide an application to an earthquake point pattern with functional marks given by seismic waveforms. 

\end{abstract}

\noindent%
{\it Keywords:} 
earthquakes; 
functional marked point process; 
local envelope test; 
mark correlation function;
marked $K$-function;  
random labelling
\vfill

\newpage
\spacingset{1.5} 


\section{Introduction}
\label{sec:intro}

The analysis of a point pattern, given as a collection of points in a region, typically begins with computing an estimate of some summary statistic which which may be used to find specific structures in the data and suggest suitable models \citep{chiu:stoyan:kendall:mecke:13,daley:vere-jones:08,gelfand:diggle:guttorp:fuentes:10,illian:penttinen:stoyan:stoyan:08,van2000markov}.
The choice of summary statistic depends both on the pattern at hand and on the feature or hypothesis of interest.

A widely used summary statistic for descriptive analyses and diagnostics, which is obtained as an instance of the so-called reduced second moment measure \citep{cressie2001analysis,chiu:stoyan:kendall:mecke:13,moller:03}, is Ripley's $K$-function \citep{ripley:76}, which is based on the assumption of a non-marked stationary and isotropic point process. 
In the marked case, assuming discrete marks 
and 
stationarity, cross versions of the $K$- or nearest neighbour distance distribution functions 
have been proposed 
\citep{diggle:13}. For real-valued marks, the mark correlation type-functions in \cite{penttinen1989statistical,illian:penttinen:stoyan:stoyan:08} are widely used and such second order statistics have been studied in more detail and reformulated by \cite{schlather2001second}, in order to obtain a more rigorous formulation. 
However, although the assumption of stationarity is mathematically appealing,  unfortunately it can rarely be justified in practice, where, mostly, the intensity tends to change over the study region. 
This is to say that the underlying point process is inhomogeneous and, in the unmarked case, \cite{baddeley2000non} proposed an inhomogeneous extension of the $K$-function for a class of point processes, which are referred to as second order intensity-reweighted stationary. Their ideas were extended to spatio-temporal point processes in \cite{gabriel:diggle:09,moller:mohammad:12}.
Further, \cite{moller:waagepetersen:04} proposed an extension of this $K$-function to second order intensity-reweighted stationary multivariate point processes. 
As indicated in \cite{cronie2016summary,iftimi2019second}, this structure may be extended to $K$-functions for general marked point processes. 
To analyse higher order interactions in general stationary marked point processes, \cite{van2006j} proposed marked versions of the nearest neighbour distance distribution functions, the empty space function and the $J$-function. These summary statistics, which allow us to study spatial interactions between different mark groupings of the points, 
were later extended to the inhomogeneous setting by \cite{cronie2016summary,iftimi2019second}. In particular, to test for random labelling, \cite{cronie2016summary} proposed inhomogeneous Lotwick-Silverman-type Monte Carlo tests based on their new summary statistics, while \cite{iftimi2019second} proposed second order Monte Carlo tests based on permuting the attached marks. 
Further details on the random shift-type testing considered in Lotwick-Silverman-type tests can be found in \cite{mrkvivcka2021revisiting}. 

Despite the relatively long history of point process theory \citep[see e.g.][]{diggle:13,stoyan:stoyan:94,daley:vere-jones:08}, few approaches have been 
proposed 
to analyse spatial point patterns where the features of interest are functions/curves instead of qualitative or quantitative variables. 
Examples of point patterns with associated functional data include forest patterns where for each tree we have a growth function, curves representing the incidence of an epidemic over a period of time, and the evolution of distinct economic parameters such as unemployment and price rates, all for distinct spatial locations. The study of such configurations allows analysing the effects of the spatial structure on individual functions.
\cite{illian2006principal} consider for each point a transformed \cite{ripley:76}’s $K$-function to characterise spatial point patterns of ecological plant communities, whilst \cite{mateu:lorenzo:porcu:07} build new marked point processes formed by spatial locations and curves defined in terms of Local Indicators of Spatial Association (LISA) functions, which describe local characteristics of the points. 
They use this approach to classify and discriminate between points belonging to a clutter and those belonging to a feature.
Finally, the idea of analysing point patterns with attached functions has been presented coherently by \cite{comas2011second,ghorbani2021functional}.

\citet{ghorbani2021functional} introduced a very broad framework for the analysis of Functional Marked Point Processes (FMPPs), indicating how they connect the point process framework with both Functional Data Analysis (FDA; \cite{ramsay2002applied}) and geostatistics. 
In particular, they defined a new family of summary statistics, so-called \textit{weighted $n$-th order marked inhomogeneous $K$-functions}, together with their non-parametric estimators, 
which they exploited to analyse Spanish population structures, such as demographic evolution and sex ratio over time. 
This summary statistic family can be used to run a Monte Carlo test of random labelling, e.g.~by means of global envelopes test (GET; \cite{myllymaki2017global}), to assess whether the functional marks of the analysed pattern are spatially dependent.
However, this procedure is essentially global, since it does not provide information on the points which mostly contributed to the rejection of the random labelling hypothesis. 
Therefore, motivated by the need of 
detecting such points, and thus the regions in which they are located, where the functional marks really do depend on the surrounding structure, in this paper we introduce a new class of summary statistics,  \textit{local $t$-weighted marked $n$-th order inhomogeneous $K$-functions}. These are used to propose a \textit{local test of random labelling}. Here $t$ refers to a function which governs how much weight we put on different aspects of the marked point process/pattern. 

Further, we use the developed tools to analyse seismic data. 
Note that while the spatial (and temporal) locations of the epicenters of 
earthquakes are typically analysed within the framework of point processes, the associated seismic waveforms are commonly investigated in separate analyses through FDA. 
Applying the local test allows us to identify where one would expect waveforms (i.e. functional marks) to be similar to those of nearby points.

All the performed analyses 
are carried out through the \cite{R} software, and the codes are available from the first author. Preliminary data manipulation is performed through the software Python \citep{van1995python}.
 
The structure of the paper is as follows.
In Section \ref{sec:data}, the motivation of this paper is presented, showing the dataset and problem that will be further analysed along the paper.
Section \ref{sec:preliminaries} contains some preliminaries on functional marked point processes. In Section \ref{sec:proposal}, we present our proposed local t-weighted 
$n$-th order 
inhomogeneous $K$-functions and their main properties, also relating them to their global counterparts. Section \ref{sec:test} outlines the main steps to run a local test of random labelling. 
In Section \ref{sec:sims}, we present a motivating example to show the further advantages of a local test, compared to a 
global 
one. 
To have a comprehensive understanding of the performance of the proposed local test, we show simulation results under different scenarios.
Section \ref{sec:real} provides an application to seismic data.
Finally, conclusions are drawn in Section \ref{sec:conc}.

\section{Data and motivation }
\label{sec:data}


\begin{figure}[htb]
	\centering
		\includegraphics[width=.75\textwidth]{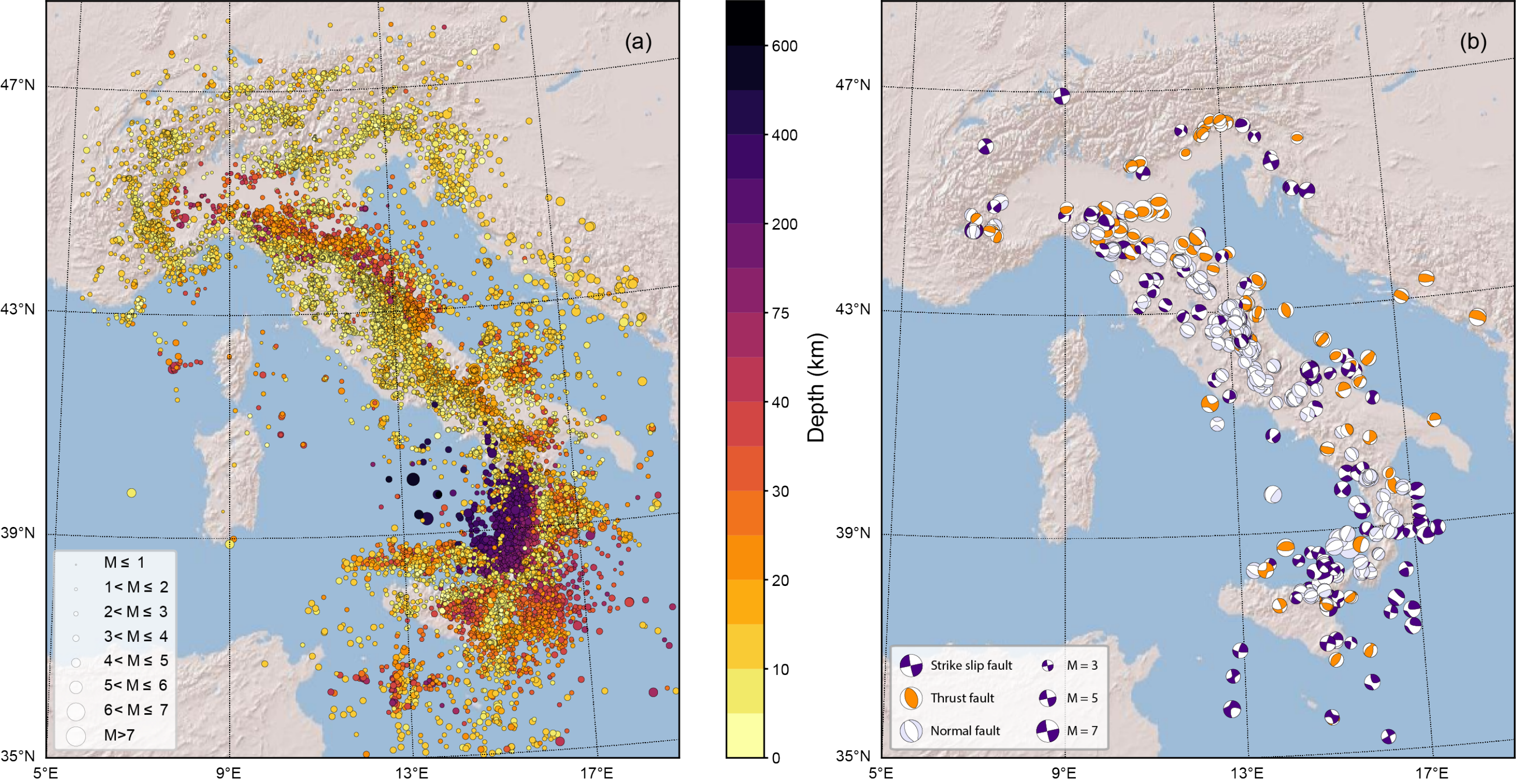}
\includegraphics[width=.75\textwidth]{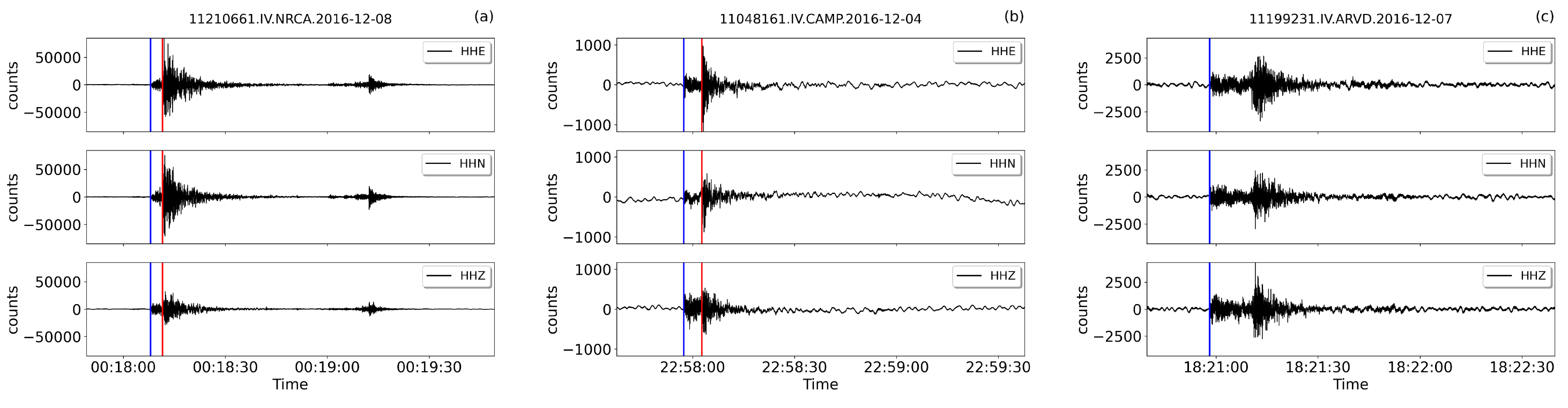}
	\caption{\textit{Top-left panel:} Earthquake locations; \textit{Top-right panel:} Seismic stations used for waveforms extraction. The symbol sizes are proportional to earthquake magnitude and number of arrival phases recorded by stations, respectively; \textit{Bottom panels}: Seismic waveforms of some events with magnitude in the range $[2, 4]$.
	 \textit{Source: \cite{michelini2021instance}}.}
	\label{fig:instance}
\end{figure}

Earthquakes' detection
provides a whole set of data which are usually studied separately, i.e. spatial (and temporal) occurrence 
of points through point process theory (\cite{siino2017spatial,iftimi2019second,dangelo2021locall}, to cite just a few recent works), 
and the analysis of waveforms through FDA \citep{adelfio2011fpca,adelfio2012simultaneous,chiodi2013clustering}.

A recently released set of data on Italian seismic activity encompasses both of these data types. 
\textit{The Italian seismic dataset for machine learning (INSTANCE)} is a dataset of seismic waveforms data and associated metadata \citep{michelini2021instance}, which includes 54008 earthquakes for a total of 1159249 3-channel waveforms. It also contains 132330 3-channel noise waveforms.
For each of these waveforms, 
115 metadata (i.e.\ statistical variables) are available, providing information on station, trace, source, path and quality. Overall, the data are collected by 19 networks which consist of 620 seismic stations. 
The dataset is available on \url{http://www.pi.ingv.it/instance/}.

The earthquake list 
in the dataset
is based on the Italian seismic bulletin (\url{http://terremoti.ingv.it/bsi}) of the ``Istituto Nazionale di Geofisica e Vulcanologia'', 
includes events which occurred 
between January 2005 and January 2020, and in the magnitude range between 0.0 and 6.5. The waveform data have been recorded primarily by the Italian National Seismic Network. The top panels of Figure \ref{fig:instance} depict the earthquake locations and the seismic stations which recorded the events.

In the bottom panels of Figure \ref{fig:instance}, some waveforms contained in the dataset are represented.
All the waveform traces have a length of 120 seconds, are sampled at 100 Hz, and are provided both in counts and ground motion physical units after deconvolution of the instrument transfer functions. The waveform dataset is accompanied by metadata consisting of more than 100 variables providing comprehensive information on the earthquake source, the recording stations, the trace features, and other derived quantities.


\section{Preliminaries on marked point processes}
\label{sec:preliminaries}

Throughout the paper, we consider marked point processes $Y=\{(x_i,m_i)\}_{i=1}^N$ 
\citep[Definition 6.4.1]{daley:vere-jones:08}, with ground points $x_i$ in the $d$-dimensional Euclidean space $\mathbb{R}^d$, $d\geq1$, which is equipped with 
the Lebesgue measure $\vert A \vert =\int_A\text{d}z$ for Borel sets $A\in\mathcal{B}(\mathbb{R}^d)$; a closed Euclidean $r$-ball around $x\in\mathbb{R}^d$ will be denoted by $b[x,r]$. 
By definition, the ground process $Y_g=\{x_i\}_{i=1}^N$, obtained from $Y$ by ignoring the marks, is a well-defined point process on $\mathbb{R}^d$ in its own right. We shall assume that $Y$ is simple, that is, it almost surely (a.s.) does not contain multiple points. Note that, formally, $Y$ is a random element in the measurable space $(N_{lf},\mathcal{N})$ of locally finite point configurations/patterns $\mathbf{x}=\{((x_1,m_1),\ldots, (x_n,m_n))\}$, $n\geq0$ \citep{daley:vere-jones:08,van2000markov}.
We assume that the mark space $\mathcal{M}$ is Polish and equipped with a finite reference measure $\nu$ on the Borel $\sigma$-algebra $\mathcal{B(M)}$. The Borel
$\sigma$-algebra  $\mathcal{B}(\mathbb{R}^d \times  \mathcal{M})=\mathcal{B}(\mathbb{R}^d)\otimes\mathcal{B}(\mathcal{M})$ is endowed with the product measure $A\times E\mapsto \vert A \vert \nu(E)$, $A\times E\in\mathcal{B}(\mathbb{R}^d \times  \mathcal{M})$. 
We will let $Y(A\times E)=\sum_{(x,m)\in Y}\mathbf{1}\{(x,m)\in A\times E\}$, where $\mathbf{1}$ is the indicator function, denote the cardinality of the random set $Y\cap(A\times E)$. 

Given this general setup, one may obtain various forms of marked point processes, most notably multivariate/multitype point processes with  $\mathcal{M}=\{1,\ldots,k\}$ \citep{diggle:13} and functional marked point processes with $\mathcal{M}$ given by a suitable function space \citep{ghorbani2021functional}.

\subsection{Functional Marked Point Processes}

In this section, we provide the definition of functional marked point processes following \cite{ghorbani2021functional}.


In classical FDA, one analyses a collection of functions $\{f_1(t), \ldots, f_n(t)\}$, $t \in \mathcal{T} \subset [0,\infty)$, $n \geq 1$, which take values in some Euclidean space $\mathbb{R}^k$, $k \geq 1$, and belong to some suitable function space, typically an $L_2$-space. Although $t$ usually represents time, it could also represent some other quantity, for example, spatial distance. 
Classically, one would assume that such a collection of functions constitute realisations or samples of some collection of independent and identically distributed (iid) random functions or stochastic processes $\{ F_1(t), \ldots , F_n(t)\}$, $t \in \mathcal{T}$. 
Such an assumption may, however, be questioned in certain settings. 
For example, two functions $f_i$ and $f_j$, which are spatially close to each other in $\mathbb{R}^k$, could  gain (or lose) from being close to each other. Accordingly, it seems natural to relax the iid
assumption for $F_1,\ldots, F_N$. A natural way to handle such a scenario is to generate $F_1,\ldots, F_N$ conditionally on some collection of (dependent) random spatial locations. 
Note that the conditional distribution of $F_1,\ldots, F_N$ could render them both dependent and independent. 

To facilitate such a setting, we consider a functional marked point process \citep{ghorbani2021functional}, which is defined as a marked point process where the marks are random elements in some (Polish) function space, $\mathcal{M}$, most notably the space of $L_2$-functions $f:\mathcal{T}\to\mathbb{R}^k$. Realisations of FMPPs are called functional
marked point patterns. 
It is noteworthy that the original formal construction of functional marked point processes by \citet{ghorbani2021functional} also included an additional non-functional mark, so that each ground process point would be marked by a pair which consists of a function and a non-functional variable. We here do not consider such auxiliary non-functional marks.



\subsection{Product densities}

Provided that it exists, the \textit{$n$-th order intensity/product density} function $\rho^{(n)}$, $n\geq1$, which is the density of the $n$-th order factorial moment measure $\alpha^{(n)}$, may be specified through the $n$-th order \textit{Campbell formula}. It states that, for any non-negative measurable function $h$ on $(\mathbb{R}^d \times  \mathcal{M})^n$, the expectation of the random sum of $h$ over $n$-tuples of distinct points of $Y$ satisfies
\begin{equation}
\begin{split}
  & \mathbb{E}\left[\sum^{\ne}_{(x_1,m_1), \ldots, (x_n,m_n) \in Y} h((x_1,m_1), \ldots, (x_n,m_n)) \right] = \\
  & =  \int \cdots \int  h((x_1,m_1), \ldots, (x_n,m_n))  \rho^{(n)}((x_1,m_1), \ldots, (x_n,m_n)) \prod_{i = 1}^{n} \text{d}x_i \nu(dm_i).
    \label{eq:campmark}
\end{split}
\end{equation}
Heuristically, $\rho^{(n)}((x_1,m_1), \ldots, (x_n,m_n)) \text{d}x_1 \nu(dm_1)\cdots\text{d}x_n \nu (dm_n)$ gives the probability that $Y$ has points in infinitesimal neighbourhoods $d(x_i,m_i)\ni (x_i,m_i) \in \mathbb{R}^d \times \mathcal{M}$ with measures $\text{d}x_i \nu(dm_i)$, $i=1,\ldots,n$. 
Moreover, we retrieve $\alpha^{(n)}((A_1\times E_1)\times\cdots\times(A_n\times E_n))$, $(A_i\times E_i)\in\mathcal{B}(\mathbb{R}^d \times  \mathcal{M})$, $i=1,\ldots,n$, by letting $h$ be given by the indicator function  $\mathbf{1}\{(x_1,m_1)\in (A_1\times E_1),\ldots,(x_n,m_n)\in (A_n\times E_n)\}$.
It further follows that 
$$
\rho^{(n)}((x_1,m_1), \ldots, (x_n,m_n))
=
f_{x_1,\dots,x_n}(m_1, \ldots, m_n)
\rho_g^{(n)}(x_1,\ldots,x_n),
$$
where $\rho_g^{(n)}$ is the $n$-th order product density of $Y_g$ and $f_{x_1,\dots,x_n}(\cdot)$ is a conditional density function on $\mathcal{M}^n$ which governs the joint distribution of $n$ marks, given that their associated ground process points are given by $x_1,\dots,x_n\in\mathbb{R}^d$. These, in turn, yield the corresponding mark distributions
\begin{equation*}
 M^{x_1,\ldots,x_n}(E_1, \ldots, E_n) = \int_{E_1} \cdots  \int_{E_n} f_{x_1,\dots,x_n}(m_1, \ldots, m_n) \prod_{i = 1}^{n} \nu(dm_i),
\end{equation*}
which govern the joint distribution on $n$ marks, given the associated ground process locations.

The intensity measure of $Y$, which coincides with the first order factorial moment measure, here satisfies 
\begin{align}
\alpha(A \times E) =& \mathbb{E}[Y(A \times E)] 
=
\int_A \int_E \rho(x,m) \text{d}x \nu (dm)
\nonumber
\\
=& 
\int_A \int_E  f_{x}(m) \rho_g(x) \text{d}x \nu (dm)
=
\int_A M^{x}(E)\rho_g(x) \text{d}x
,
\label{eq:first}
\end{align}
where the first order intensity functions $\rho=\rho^{(1)}$ and $\rho_g=\rho_g^{(1)}$ are typically referred to as \textit{the intensity functions} of $Y$ and $Y_g$. 
Note that 
$\rho$ may be viewed as a ``heat map" which reflects the infinitesimal chance of having a point of $Y$ at/around an arbitrary location in $\mathbb{R}^d \times  \mathcal{M}$. When the intensity function (of the ground process) is constant, we say that the (ground) process is homogeneous, otherwise it is called inhomogeneous.  

When, conditional on the ground process, all marks have the same marginal univariate distribution, so that $M^{z}(E) = \int_{E} f_{z}(m) \text{d}\nu(dm) = \int_{E} f(m) \text{d}\nu(m) = M(E)$, we say that $X$ has a common (marginal) mark distribution. 
This holds e.g.~when $Y$ is stationary, i.e.~when its distribution is invariant under translations of the ground points; here $\alpha(A \times E)=\rho_g M(E) \vert  A \vert $ and $\rho_g>0$ is the constant intensity of the ground process. 
We will see that, at times, it is particularly convenient to have here that the reference measure $\nu$ coincides with the common mark distribution $M$, which implies that the common mark density $f$ is set to $1$ and $\rho(x,m)=\rho_g(x)$. 

When $Y$ is independently marked, i.e.~when the marks are independent conditional on the ground process, 
$f_{x_1,\ldots,x_n}(m_1, \ldots, m_n)=f_{x_1}(m_1)\cdots f_{x_n}(m_n)$ for any $n\geq1$ and if, in addition, there is a common mark distribution, whereby the marks are iid conditional on the ground process, we say that $Y$ is randomly labelled and note that $f_{x_1,\ldots,x_n}(m_1, \ldots, m_n)=f(m_1)\cdots f(m_n)$.

\subsubsection{Intensity reweighted stationarity}

We next turn to the notion of a $k$-th order marked intensity reweighted stationary ($k$-MIRS) marked point process $Y$ \citep{ghorbani2021functional}. We say that $Y$ is $k$-MIRS, $k\in\{1,2,\ldots\}$, if $\rho$ is bounded away from 0 and the correlation functions
\begin{align*}
&g^{(n)}((x_1,m_1), \ldots, (x_n,m_n))
=
\\
=&
\frac{\rho^{(n)}((x_1,m_1), \ldots, (x_n,m_n))}{\rho(x_1,m_1)\cdots\rho(x_n,m_n)}
=
\frac{f_{x_1,\dots,x_n}(m_1, \ldots, m_n)}{f_{x_1}(m_1)\cdots f_{x_n}(m_n)}
\frac{\rho_g^{(n)}(x_1, \ldots, x_n)}{\rho_g(x_1)\cdots\rho_g(x_n)}
,
\quad n\geq1,
\end{align*}
satisfy $g^{(n)}((x_1,m_1), \ldots, (x_n,m_n))=g^{(n)}((z+x_1,m_1), \ldots, (z+x_n,m_n))$ for any $z\in\mathbb{R}^d$ and any $n\leq k$. Note that $g^{(1)}(\cdot)\equiv1$ and that the second ratio on the right hand side is the $n$-th order correlation function, $g_g^{(n)}$, of the ground process. Provided that the product densities of all orders exist, stationarity implies $k$-MIRS for all orders $k\geq1$. Note further that $g^{(n)}(\cdot)\equiv1$, $n\geq1$, for a Poisson process and when $g^{(n)}((x_1,m_1), \ldots, (x_n,m_n))>1$ points of $Y_g$ in infinitesimal neighbourhoods of $x_1,\ldots,x_n$ with marks in infinitesimal neighbourhoods of  $m_1,\ldots,m_n$ tend to cluster/aggregate. Similarly, $g^{(n)}((x_1,m_1), \ldots, (x_n,m_n))<1$ indicates inhibition/regularity.

\subsection{Palm distributions}

Let $Y$ be a simple marked point process whose intensity function exists. Many of the summary statistics we will consider can be expressed in terms of \textit{reduced Palm distributions}. These satisfy the \textit{reduced Campbell–Mecke} formula which states that, for any non-negative measurable function $h$ on the product space $(\mathbb{R}^d \times \mathcal{M})\times N_{lf}$,  
\begin{align}
\label{eq:mecke}
\mathbb{E} \left[ \sum_{(z,m) \in Y} h((z,m), Y \backslash \{ (z,m) \}) \right]
=& 
\int
\mathbb{E}[h((x,m),Y^{!(x,m)})]\rho(x,m)\text{d}x
\nu(dm)
\\
=& 
\int
\mathbb{E}^{!(x,m)}[h((x,m),Y)]\rho(x,m)\text{d}x
\nu(dm).
\nonumber
\end{align}
Here $Y^{!(x,m)}$ is the reduced Palm process at $(x,m)\in\mathbb{R}^d\times\mathcal{M}$, which we interpret as $Y$ conditioned on the null event that there is a point in $(x,m)$, which is removed upon realisation. The probability distribution $P^{!(x,m)}(\cdot) = \mathbb{P}^{!(x,m)}(Y \in \cdot) = \mathbb{P}(Y^{!(x,m)}\in\cdot)$ on $(N_{lf},\mathcal{N})$, which corresponds to $\mathbb{E}^{!(x,m)}$, is called the reduced Palm distribution at $(x,m)$. 

\section{Local weighted marked summary statistics}\label{sec:proposal}

Global summary statistics have had a prominent role in the statistical analysis of point processes. More precisely, their non-parametric estimators are typically used to characterise the degree of spatial interaction present in the underlying data-generating point process. 
In Section \ref{sec:intro}, we have reviewed a few such examples, for instance $K$-functions. 


The individual contributions to a global statistic, which are commonly called Local Indicators of Spatial Association (LISA) functions, can be used to identify outlying components measuring the influence of each contribution to the global statistic \citep{anselin:95}. This is the case of the scatter plot based on the local Moran index \citep{anselin1996chapter}. On the other hand, the individual contributions
can be used to test for specific local structures, such as 
spatial association and hot spot detection in areal data  \citep{getis:ord:92}. Basically, the local statistics mentioned so far are often used to analyse areal data but \cite{getis:franklin:87} introduced a local version of the $K$-function for spatial point processes to show that trees exhibit different kinds of heterogeneity when examined at different scales of analysis. The notion of individual functions for 
certain 
statistics has also been studied in \cite{stoyan:stoyan:94} and \cite{mateu:lorenzo:porcu:10} showed that the local product density function \citep{cressie2001analysis} is more sensitive to identifying different local structures and unusual points than the local $K$-function. Applications of LISA functions range from detection of features in images with noise \citep{mateu:lorenzo:porcu:07} to detection of disease clusters \citep{moraga:montes:11}. In \cite{siino2018testing} the authors extend local indicators of spatial association to the spatio-temporal context (LISTA functions) based on the second order product density, and these local functions have been used to define a proper statistical test for clustering detection.
Recently, LISTA functions have been used both for diagnostic \citep{adelfio2020some} and fitting purposes \citep{dangelo2021locally}.
Finally, \cite{dangelo2021assessing} extended LISTA functions to spatio-temporal point processes living on linear networks.

As we have clearly indicated, 
an alternative to studying the aforementioned global summary statistics for marked point processes 
is 
considering 
local summary statistics which 
describe the spatial interaction in the vicinity of a given marked point. 
In order to do so here in the marked context, we introduce the function
\begin{align}
\label{e:GeneralLocal}
  &L((x,m),\mathbf{x}) = 
  L_n((x,m),\mathbf{x}; \tilde t, \tilde \rho) 
  =
  \sum_{(x_1,m_1), \ldots, (x_{n-1},m_{n-1}) \in \mathbf{x} 
  }^{\ne}
  \frac{
  \tilde t((x,m),(x_1,m_1),\ldots,(x_{n-1},m_{n-1}))
  }{\tilde \rho(x,m)\tilde \rho(x_1,m_1) \cdots \tilde \rho(x_{n-1},m_{n-1})}
  ,
\end{align}
for $(x,m)\in \mathbb{R}^d\times\mathcal{M}$, point patterns $\mathbf{x}\in N_{lf}$ and measurable $\tilde t:(\mathbb{R}^d\times\mathcal{M})^n\to\mathbb{R}$, 
$n\geq2$. Note that, formally, the argument $\tilde \rho$ does not need to be the true intensity function $\rho$ of $Y$, it could e.g.~be a plug-in estimator. We will exploit Definition \ref{def:LocalMarked}, and thereby \eqref{e:GeneralLocal}, to define proper notions of (mark-weighted $n$-th order inhomogeneous) local summary statistics.

\begin{definition}
\label{def:LocalMarked}
Given a marked point process $Y$, we refer to $L((x,m),Y\setminus\{(x,m)\}; \tilde t, \tilde \rho)$, $(x,m)\in Y$, as the family of $n$-th order local marked cumulative summary statistics of $Y$ associated with $\tilde t$ and $\tilde \rho$. 
\end{definition}

The construction of a specific local statistic is obtained by identifying when, for some function family $\{\tilde t_r\}$, 
\begin{align}
\label{e:GeneralGlobal}
G(r,Y) = \sum_{(x,m)\in Y} L_n((x,m), Y\setminus\{(x,m)\}; \tilde t_r, \tilde\rho)
\end{align}
forms an estimator of an existing global summary statistic. 

Using $n$-th order local marked cumulative summary statistics 
to quantify local spatial interactions for a point pattern $\mathbf{x}$ entails inserting an estimate $\widehat\rho(x,m)=\widehat f_{z}(m)\widehat{\rho}_g(x)$ for the unknown intensity $\rho(x,m)=f_{z}(m)\rho_g(x)$, i.e.\ setting $\tilde\rho=\widehat\rho$. 
When we assume that there is a common mark distribution which coincides with the mark reference measure $\nu$, we obtain that $\widehat\rho(x,m)=\widehat{\rho}_g(x)$, i.e.\ the intensity estimate does not depend on the mark values. 
Imposing this assumption is particularly convenient when dealing with functional marks since estimation of the mark density, which here is a density on a function space, is rather challenging and beyond the scope of this paper. Note that when $Y$ is randomly labelled, it has a common mark distribution and in this setting the assumption  $\widehat\rho(x,m)=\widehat{\rho}_g(x)$ thus makes sense.

Turning to the distributional properties of the $n$-th order local marked cumulative summary statistics, we next derive their expectations under the assumption of $k$-MIRS. Note, in particular, that the choice of $\tilde t$ plays a significant role here.

\begin{theorem}
\label{thm:Local}
When $Y$ is $k$-MIRS and $\tilde\rho=\rho$, for any $W\in\mathcal{B}(\mathbb{R}^d)$ the expectation of $L((x,m),Y\setminus\{(x,m)\}\cap W\times\mathcal{M}; \tilde t, \rho)$, $(x,m)\in Y\cap W\times\mathcal{M}$, is almost everywhere given by 
\begin{align*}
&\int_{W-x}\cdots\int_{W-x}
 \Bigg(
 \int_{\mathcal{M}}\cdots\int_{\mathcal{M}}
  \tilde t((x,m),(x_1+x,m_1),\ldots,(x_{n-1}+x,m_{n-1}))
  \times
  \\
  &\times
 \frac{f_{0,x_1,\ldots,x_{n-1}}(m,m_1, \ldots, m_{n-1})}{f_{0}(m) f_{x_1}(m_1)\cdots f_{x_{n-1}}(m_{n-1})}
 \nu(dm_1) \cdots \nu(dm_{n-1})
 \Bigg)
 g_g^{(n)}(0, x_1, \ldots, x_{n-1})
 \text{d}x_1 \cdots \text{d}x_{n-1}
  ,
\end{align*}
when $2\leq n\leq k$. Moreover, the expectation of $G(r,Y\cap W\times\mathcal{M})$ is obtained by replacing $\tilde t$ by $\tilde t_r$ in the expression above and integrating it over $W\times\mathcal{M}$ with respect to the reference measure on $\mathbb{R}^d \times \mathcal{M}$. 
\end{theorem}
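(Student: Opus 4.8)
The plan is to unfold the definition of $L$ into a single sum over distinct $n$-tuples of $Y$, apply the $n$-th order Campbell formula \eqref{eq:campmark} together with the reduced Campbell--Mecke formula \eqref{eq:mecke}, and then cash in the $k$-MIRS translation invariance of $g^{(n)}$. First I would make precise what ``the expectation of $L((x,m),Y\setminus\{(x,m)\}\cap W\times\mathcal{M};\tilde t,\rho)$, $(x,m)\in Y\cap W\times\mathcal{M}$'' means: it is the ($dx\,\nu(dm)$-a.e.\ defined) density $\mathcal{E}(x,m)$ of the measure $B\mapsto\mathbb{E}\big[\sum_{(x,m)\in Y\cap B}L((x,m),(Y\setminus\{(x,m)\})\cap W\times\mathcal{M};\tilde t,\rho)\big]$; equivalently, by \eqref{eq:mecke}, $\mathcal{E}(x,m)=\rho(x,m)\,\mathbb{E}^{!(x,m)}[L((x,m),Y\cap W\times\mathcal{M};\tilde t,\rho)]$, which is also the natural object since $G(r,Y\cap W\times\mathcal{M})=\sum_{(x,m)\in Y\cap W\times\mathcal{M}}L(\cdots)$ is an estimator.

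To identify $\mathcal{E}$, I would fix a non-negative measurable test function $g$ on $\mathbb{R}^d\times\mathcal{M}$ and compute $\mathbb{E}\big[\sum_{(x,m)\in Y\cap W\times\mathcal{M}}g(x,m)\,L(\cdots)\big]$ in two ways. On the one hand, the reduced Campbell--Mecke formula \eqref{eq:mecke} (applied to $h((x,m),\mathbf{y})=g(x,m)\mathbf{1}\{x\in W\}L((x,m),\mathbf{y}\cap W\times\mathcal{M};\tilde t,\rho)$) writes it as $\int_{W\times\mathcal{M}}g(x,m)\,\mathcal{E}(x,m)\,dx\,\nu(dm)$. On the other hand, since the inner sum in $L$ already runs over distinct $(n-1)$-tuples of $Y\setminus\{(x,m)\}$ with ground components in $W$, adjoining the outer sum over $(x,m)\in Y\cap W\times\mathcal{M}$ turns the whole thing into a sum over distinct $n$-tuples of $Y$ with all ground components in $W$; applying \eqref{eq:campmark} of order $n$ and cancelling the denominator $\rho(x,m)\rho(x_1,m_1)\cdots\rho(x_{n-1},m_{n-1})$ of $L$ against the intensity factors of $\rho^{(n)}$ via $\rho^{(n)}/(\rho\cdots\rho)=g^{(n)}$ yields $\int_{W\times\mathcal{M}}g(x,m)\big(\int_{(W\times\mathcal{M})^{n-1}}\tilde t((x,m),(x_1,m_1),\dots,(x_{n-1},m_{n-1}))\,g^{(n)}((x,m),(x_1,m_1),\dots,(x_{n-1},m_{n-1}))\prod_{i=1}^{n-1}dx_i\,\nu(dm_i)\big)dx\,\nu(dm)$. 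Letting $g$ vary identifies $\mathcal{E}(x,m)$, a.e., with the inner integral.

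Next I would invoke $k$-MIRS: since $2\le n\le k$, the correlation function $g^{(n)}$ is translation invariant, so $g^{(n)}((x,m),(x_1,m_1),\dots,(x_{n-1},m_{n-1}))=g^{(n)}((0,m),(x_1-x,m_1),\dots,(x_{n-1}-x,m_{n-1}))$; the substitution $x_i\mapsto x_i+x$ (so that $W$ becomes $W-x$) brings the inner integral into the claimed shape. Then, using the factorisation $g^{(n)}((0,m),(x_1,m_1),\dots)=\big(f_{0,x_1,\dots,x_{n-1}}(m,m_1,\dots,m_{n-1})/(f_0(m)f_{x_1}(m_1)\cdots f_{x_{n-1}}(m_{n-1}))\big)\,g_g^{(n)}(0,x_1,\dots,x_{n-1})$ recalled in Section~\ref{sec:preliminaries} and pulling the mark-free ground correlation $g_g^{(n)}$ outside the $\mathcal{M}^{n-1}$-integrals gives exactly the displayed expression. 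The claim for $G$ follows by taking $g\equiv1$ in the two-ways computation, since then $\mathbb{E}[G(r,Y\cap W\times\mathcal{M})]=\int_{W\times\mathcal{M}}\mathcal{E}_r(x,m)\,dx\,\nu(dm)$ with $\mathcal{E}_r$ the same expression with $\tilde t$ replaced by $\tilde t_r$, i.e.\ integration over $W\times\mathcal{M}$ against the reference measure on $\mathbb{R}^d\times\mathcal{M}$.

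The main obstacle I anticipate is not the algebra but the measure-theoretic bookkeeping: since $\tilde t$ is a general, possibly sign-changing, measurable function, one must assume or verify that $\tilde t\,g^{(n)}$ is integrable over $(W\times\mathcal{M})^{n-1}$ for $dx\,\nu(dm)$-a.e.\ $(x,m)$, so that Fubini and the ``two-ways'' evaluation are legitimate and the a.e.\ identification of the density $\mathcal{E}$ is unambiguous; the clean route is to prove everything first for non-negative $\tilde t$ (where all applications of \eqref{eq:campmark}, \eqref{eq:mecke} and Tonelli are automatic) and then extend by splitting $\tilde t=\tilde t^{+}-\tilde t^{-}$. A secondary point worth recording is the elementary identity $\rho^{(n-1)}_{Y^{!(x,m)}}((x_1,m_1),\dots,(x_{n-1},m_{n-1}))=\rho^{(n)}((x,m),(x_1,m_1),\dots,(x_{n-1},m_{n-1}))/\rho(x,m)$, which is exactly what makes the unfolding of the inner sum of $L$ under $\mathbb{E}^{!(x,m)}$ precise, and the remark that $\rho$ being bounded away from $0$ lets one pass from $(\rho\,dx\,\nu(dm))$-a.e.\ to $(dx\,\nu(dm))$-a.e.
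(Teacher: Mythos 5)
Your proposal is correct and follows essentially the same route as the paper's proof: evaluate the double sum over points of $Y$ in a test set once by the reduced Campbell--Mecke formula and once by the $n$-th order Campbell formula (the paper uses indicators of arbitrary bounded $A\times E$ where you use a general non-negative test function $g$, which is equivalent for the a.e.\ identification), then invoke the $k$-MIRS translation invariance with the substitution $u_i=x_i-x$ and finish by factoring $g^{(n)}$ into the mark-density ratio times $g^{(n)}_g$ via Fubini. The only divergence is bookkeeping: you identify the density $\mathcal{E}(x,m)=\rho(x,m)\,\mathbb{E}^{!(x,m)}[L_n]$ with the displayed integral, whereas the paper states the conclusion for $\mathbb{E}^{!(x,m)}[L_n]$ itself even though its two-way computation (in which the $\rho(x,m)$ factor has already been cancelled against the Campbell--Mecke intensity) actually yields your version --- and your reading is the one consistent with the theorem's second claim that $\mathbb{E}[G]$ is recovered by integrating the expression against the reference measure $\mathrm{d}x\,\nu(\mathrm{d}m)$.
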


\begin{proof}
Note first that the expectation coincides with 
\begin{align*}
&
\mathbb{E}^{!(x,m)}[L_n((x,m), Y\cap W\times\mathcal{M}; \tilde t, \rho)]=
\\
=&
\mathbb{E}^{!(x,m)}\left[
\sum_{(x_1,m_1), \ldots, (x_{n-1},m_{n-1}) \in Y\cap W\times\mathcal{M}}^{\ne} 
\frac{
  \tilde t((x,m),(x_1,m_1),\ldots,(x_{n-1},m_{n-1}))
  }{\rho(x,m) \rho(x_1,m_1) \cdots \rho(x_{n-1},m_{n-1})}
\right]
.
\end{align*} 
Hence, our starting point will be the reduced Campbell-Mecke formula. 
Consider an arbitrary bounded $A\times E\in\mathcal{B}(\mathbb{R}^d \times \mathcal{M})$. It follows that 
\begin{align*}
  & 
  \mathbb{E}\left[\sum_{(x,m)\in Y \cap A \times E}  \sum_{(x_1,m_1), \ldots, (x_{n-1},m_{n-1}) \in Y \setminus \{ (x,m) \}\cap W\times\mathcal{M}}^{\neq} 
  \frac{
  \tilde t((x,m),(x_1,m_1),\ldots,(x_{n-1},m_{n-1}))
  }{\rho(x,m) \rho(x_1,m_1) \cdots \rho(x_{n-1},m_{n-1})}
  \right]
  =
  \\
  =& 
  \int_{A\times E}
  \mathbb{E}^{!(x,m)} \left[
\sum_{(x_1,m_1), \ldots, (x_{n-1},m_{n-1}) \in Y\cap W\times\mathcal{M}}^{\ne} \frac{
  \tilde t((x,m),(x_1,m_1),\ldots,(x_{n-1},m_{n-1}))
  }{\rho(x_1,m_1) \cdots \rho(x_{n-1},m_{n-1})}\right]
  \text{d}x \nu(dm).
\end{align*}
On the other hand, by the Campbell formula we have that
\begin{align*}
  & 
  \mathbb{E}\left[\sum_{(x,m)\in Y \cap A \times E}  \sum_{(x_1,m_1), \ldots, (x_{n-1},m_{n-1}) \in Y \setminus \{ (x,m) \}\cap W\times\mathcal{M}}^{\neq} 
  \frac{
  \tilde t((x,m),(x_1,m_1),\ldots,(x_{n-1},m_{n-1}))
  }{\rho(x,m) \rho(x_1,m_1) \cdots \rho(x_{n-1},m_{n-1})}
  \right]
  =
  \\
  =& 
  \int_{A\times E}\int_{\mathbb{R}^d \times \mathcal{M}}\cdots\int_{\mathbb{R}^d \times \mathcal{M}}
  \mathbf{1}\{x_1,\ldots,x_{n-1}\in W\}
  \tilde t((x,m),(x_1,m_1),\ldots,(x_{n-1},m_{n-1}))
  \times
  \\
  &\times
  g^{(n)}((x,m),(x_1,m_1),\ldots,(x_{n-1},m_{n-1}))
  \text{d}x_1 \nu(dm_1)
  \cdots
  \text{d}x_{n-1} \nu(dm_{n-1})
  \text{d}x \nu(dm)
  \\
  =& 
  \int_{A\times E}
  \int_{\mathbb{R}^d \times \mathcal{M}}\cdots\int_{\mathbb{R}^d \times \mathcal{M}}
  \prod_{i=1}^{n-1}\mathbf{1}\{u_i\in W-x\}
  \tilde t((x,m),(u_1+x,m_1),\ldots,(u_{n-1}+x,m_{n-1}))
  \times
  \\
  &\times
  g^{(n)}((0,m),(u_1,m_1),\ldots,(u_{n-1},m_{n-1}))
  \text{d}u_1 \nu(dm_1)
  \cdots
  \text{d}u_{n-1} \nu(dm_{n-1})
  \text{d}x \nu(dm)
\end{align*}
by the imposed $k$-MIRS and a change of variables, $u_i+x=x_i$. 
Hence, since $A\times E\in\mathcal{B}(\mathbb{R}^d \times \mathcal{M})$ was arbitrary, for almost every $(x,m)$ we have that 
\begin{align*}
&
\mathbb{E}^{!(x,m)}[L_n((x,m), Y; \tilde t, \rho)]=
\\
=&
\int_{(W-x) \times \mathcal{M}}\cdots\int_{(W-x) \times \mathcal{M}}
  \tilde t((x,m),(u_1+x,m_1),\ldots,(u_{n-1}+x,m_{n-1}))
  \times
  \\
  &\times
  g^{(n)}((0,m),(u_1,m_1),\ldots,(u_{n-1},m_{n-1}))
  \text{d}u_1 \nu(dm_1)
  \cdots
  \text{d}u_{n-1} \nu(dm_{n-1})
 \\
 =&
 \int_{W-x}\cdots\int_{W-x}
 \Bigg(
 \int_{\mathcal{M}}\cdots\int_{\mathcal{M}}
  \tilde t((x,m),(u_1+x,m_1),\ldots,(u_{n-1}+x,m_{n-1}))
  \times
  \\
  &\times
 \frac{f_{0,u_1,\ldots,u_{n-1}}(m,m_1, \ldots, m_{n-1})}{f_{0}(m) f_{u_1}(m_1)\cdots f_{u_{n-1}}(m_{n-1})}
 \nu(dm_1) \cdots \nu(dm_{n-1})
 \Bigg)
 g_g^{(n)}(0, u_1, \cdots, u_{n-1})
 \text{d}u_1 \cdots \text{d}u_{n-1} 
  ,
\end{align*}
by Fubini's theorem.

\end{proof}

The first thing we note is that when $Y$ is independently marked then the density ratio in the expression for the expectation vanishes. In addition, if $Y$ is a Poisson process on $\mathbb{R}^d \times \mathcal{M}$ which satisfies being a marked point process with mark space $\mathcal{M}$, then the expectation reduces to an integral with $\tilde t$ as integrand. These observations may be used as benchmarks for when $Y$ exhibits mark (in)dependence and spatial interaction locally.


\subsection{Special cases}

We next illustrate how \eqref{e:GeneralGlobal}, through Definition \ref{def:LocalMarked} and \eqref{e:GeneralLocal}, reduces to several existing summary statistic estimators by varying $\tilde t$ and $\tilde \rho$. 


\subsubsection{Ground \textit{K}-functions}
First, set $n=2$ and $\tilde t$ to $\tilde t_r((x,m),(x_1,m_1))=w(x,x_1)\mathbf{1}\{x_1\in x+C\}/ \vert W \vert $, $r\geq0$, where $W\subseteq\mathbb{R}^d$, $\vert  W \vert  >0$, and $w(\cdot)$ is an edge correction term. 
If the ground process is stationary with intensity $\rho_g>0$ and $\tilde \rho(x,m)\equiv\rho_g$, then \eqref{e:GeneralGlobal} with $Y$ set to $Y\cap W\times\mathcal{M}$ reduces to an estimator of Ripley's $K$-function when $x+C=x+b[0,r]=b[x,r]$ whereas if the ground process is inhomogeneous and we set $\tilde \rho(x,m)=\rho_g(x)$, it follows that \eqref{e:GeneralGlobal} reduces to an estimator of the inhomogeneous $K$-function \citep{baddeley2000non} for $Y_g$.
The extension to space-time is straightforward; replace the Euclidean ball $b[0,r]$ by 
$C = \{ (x, s) : \| x \| \leq r, |s| \leq t\}\in \mathcal{B}(\mathbb{R}^{d+1})$, where $\| \cdot \|$ denotes the Euclidean norm \citep{cronie:lieshout:15,gabriel:diggle:09,iftimi2019second}. 

\subsubsection{Marked \textit{K}-functions}
When $n=2$, by instead letting $\tilde t_r((x,m),(x_1,m_1))=w(x,x_1)\mathbf{1}\{x_1\in x+C\}\mathbf{1}\{m\in E, m_1 \in E_1\}/(|W|
\nu(E)\nu(E_1))$ and $\tilde\rho=\rho$ in \eqref{e:GeneralLocal}, using a suitable edge correction function $w(\cdot)$, then $G(r,Y\cap W\times \mathcal{M})$ in \eqref{e:GeneralGlobal} 
reduces to an estimator of the \textit{marked second order reduced moment measure} $\mathcal{K}^{E E_1}(C)$ of \cite{iftimi2019second}, which measures the intensity reweighted interactions between points with marks in $E$ and points with marks in $E_1$, when their separation vectors belong to $C \in \mathcal{B}(\mathbb{R}^d)$. We note that measures of this kind are in general not symmetric, i.e.\  $\mathcal{K}^{ E E_1}(\cdot) \ne \mathcal{K}^{E_1 E}(\cdot)$ \citep{iftimi2019second}. Furthermore, choosing $C$ to be the closed origin-centred ball $b[0,r]$ of radius $r \geq 0$, we consider the marked inhomogeneous $K$-function $K^{E  E_1}_{inhom}(r)$ of \cite{cronie2016summary}, which measures pairwise intensity reweighted spatial dependence within distance $r$ between points with marks in $E$ and points with marks in $E_1$. 
    
By additionally letting $n>2$, we obtain a definition of a \textit{marked $n$-th order reduced moment measure}, $\mathcal{K}^{ E \times_{i=1}^{n-1}E_i}(C_1 \times \cdots \times C_{n-1})$, which measures the intensity reweighted spatial interaction between an arbitrary point with mark in $E$ and distinct $(n-1)$-tuples of other points, where the separation vectors between the $E$-marked point and these $n-1$ points, which have marks in $E_1,\ldots,E_{n-1}$, belong to $C_1,\ldots,C_{n-1}$. 
We note that $C_i = b[0,r]$, $i = 1, \ldots, n-1$, $r \geq 0$, yields an $n$-point version of the marked inhomogeneous $K$-function $K_{inhom}^{ E \times_{i = 1}^{n-1} E_i}(r)$ of \cite{cronie2016summary}, which may be used to analyse intensity reweighted interactions between a point with mark in $ E$ and $n-1$ of its $r$-close neighbours, which have marks belonging to the respective sets $E_1,\ldots,E_{n-1}$. 

\subsubsection{Weighted marked reduced moment measures and \textit{K}-functions}

    Finally, by letting $\tilde\rho=\rho$ and $\tilde t((x,m),(x_1,m_1),\ldots,(x_{n-1},m_{n-1}))$ be given by the product of
    \begin{align}
    \label{e:ProductTestFun}
    \tilde t(m,m_1,\ldots,m_{n-1}) =& 
    t(m,m_1,\ldots,m_{n-1})
    \frac{
    \mathbf{1}\{ m \in E\}
    }{\nu(E)}
    \prod_{i=1}^{n-1}
    \frac{\mathbf{1}\{m_i\in E_i\}}{\nu(E_i)}
    ,
    \\
    \tilde w(x,x_1,\ldots,x_{n-1}) =& 
    w(x,x_1,\ldots,x_{n-1})
    \prod_{i=1}^{n-1}
    \mathbf{1}\{x_i \in (x + C_i)\}
    ,\nonumber
    \end{align}
     for $E\in \mathcal{B(M)}$, $\nu(E)>0$, and $C_i \times  E_i \in \mathcal{B}(\mathbb{R}^d) \times \mathcal{B(M)}= \mathcal{B}(\mathbb{R}^d \times \mathcal{M})$,  $\nu(E_i)>0$, $i=1,\ldots,n-1$, 
     we obtain an unbiased estimator $\hat{\mathcal{K}}_t^{ E \times_{i = 1}^{n-1}  E_i}(C_1 \times \cdots \times C_{n-1})
  = G(r,Y\cap W\times 
  \mathcal{M}
  )$ of the $t$-weighted marked $n$-th order reduced moment measure of \citet{ghorbani2021functional}, 
\begin{align}
\label{eq:tweight0}
&\mathcal{K}_t^{ E \times_{i = 1}^{n-1}  E_i}(C_1 \times \cdots \times C_{n-1})
  =
  \\
  =&
  \frac{1}{\vert W \vert \nu(E) \prod_{i=1}^{n-1}\nu(E_i)}
  \mathbb{E}\Bigg[\sum_{(x,m)\in Y \cap W \times E}  \sum_{(x_1,m_1), \ldots, (x_{n-1},m_{n-1}) \in Y \setminus \{ (x,m) \}}^{\neq} \frac{t(m,m_1,\ldots,m_{n-1})}{\rho(x,m)}
  \times
  \nonumber
  \\
  & \times 
  \prod_{i=1}^{n-1}\frac{\mathbf{1}\{x_i-x \in C_i\}\mathbf{1}\{ m_i \in  E_i\}}{\rho(x_i,m_i)}
  \Bigg]
  \nonumber
\end{align}
assuming that the edge correction function $w$ is such that unbiasedness holds. 
Examples of such $w$ include the minus sampling edge correction and the translational edge correction \citep{ghorbani2021functional}.
Note here that one just as well could have merged the scaled indicators in the expression for $\tilde t$ with $t$ so that $\tilde t=t$; \cite{ghorbani2021functional} included this mark set filtering to highlight that their summary statistic generalises previously proposed ones. 

\subsection{Local t-weighted marked $n$-th order  inhomogeneous K-function}

In this section, we provide the estimator corresponding to the local contributions of \eqref{eq:tweight0} and discuss its properties.


\begin{definition}
\label{def:LocalK}
Let $\tilde t$ be (up to indicator-scaling) as in \eqref{e:ProductTestFun} and consider 
\begin{equation}
    \begin{split}
&\hat{ \mathcal{K}}_t^{(x,m) \times_{i = 1}^{n-1}  E_i}(C_1 \times \cdots \times C_{n-1}) = L_n((x,m), Y\setminus\{(x,m)\}\cap W\times\mathcal{M}; \tilde t, \tilde  \rho)=
\\
=&  
\frac{1}{\tilde{\rho}(x,m)\nu(E) \prod_{i=1}^{n-1}\nu( E_i)}
\sum_{(x_1,m_1), \ldots, (x_{n-1},m_{n-1}) \in Y \setminus \{ (x,m) \}\cap W\times\mathcal{M}}^{\neq}  w(x,x_1,\ldots,x_{n-1}) 
\times
\\
  &\times  
  t(m,m_1,\ldots,m_{n-1})
  \prod_{i=1}^{n-1}\frac{\mathbf{1}\{x_i-x \in C_i\}\mathbf{1}\{ m_i \in E_i\} }{\tilde{\rho}(x_i, m_i)},
  \qquad
  (x,m)\in Y\cap W\times\mathcal{M},
  \label{eq:lo2}   
    \end{split}
\end{equation}
for some suitable edge correction $w$ in \eqref{e:ProductTestFun}, 
$W\in\mathcal{B}(\mathbb{R}^d)$, $E\in\mathcal{B}(\mathcal{M})$, $\nu(E)>0$, and $C_i \times  E_i \in \mathcal{B}(\mathbb{R}^d \times \mathcal{M})$, $\nu(E_i)>0$, $i=1,\ldots,n-1$.
We refer to 
$\hat{ \mathcal{K}}_t^{(x,m) \times_{i = 1}^{n-1}  E_i}(r)=\hat{ \mathcal{K}}_t^{(x,m) \times_{i = 1}^{n-1}  E_i}(b[0,r]^{n-1})$, $r\geq0$, 
as a \textit{local t-weighted marked $n$-th order  inhomogeneous $K$-function}. In particular, $\hat{\mathcal{K}}_{t,n}^{(x,m)}(r)=\hat{\mathcal{K}}_t^{(x,m) \times \mathcal{M}^{n-1}}(r)$ does not perform any explicit mark set filtering.

\end{definition}

Note first that when there is a common mark distribution which coincides with the reference measure on $\mathcal{M}$, setting $\tilde\rho=\rho$ we, for instance, obtain
\begin{align*}
\hat{\mathcal{K}}_{t,n}^{(x,m)}(r)
=&
\sum_{(x_1,m_1), \ldots, (x_{n-1},m_{n-1}) \in Y \setminus \{ (x,m) \}\cap (b[x,r]\cap W)\times\mathcal{M}}^{\neq} 
\frac{t(m,m_1,\ldots,m_{n-1}) w(x,x_1,\ldots,x_{n-1})}
{\rho_g(x)\rho_g(x_1)\cdots\rho_g(x_{n-1})}
\end{align*}
since $\nu$ must be a probability measure here.

Regarding the distributional properties of \eqref{eq:lo2}, when $Y$ is $k$-MIRS,  Theorem \ref{thm:Local} tells us that the expectation is given by 
\begin{align*}
&
\frac{1}{\nu(E)}
\prod_{i=1}^{n-1}
    \frac{1}{\nu(E_i)}
\int_{\mathbb{R}^d}\cdots\int_{\mathbb{R}^d}
w(x,x_1+x,\ldots,x_{n-1}+x)
    \prod_{i=1}^{n-1}
    \mathbf{1}\{x_i \in (x + C_i)\cap(W-x)\}
    \times
  \\
  \times&
 \Bigg(
 \int_{E_1}\cdots\int_{E_{n-1}}
 t(m,m_1,\ldots,m_{n-1})
 \frac{f_{0,x_1,\ldots,x_{n-1}}(m,m_1, \ldots, m_{n-1})}{f_{0}(m) f_{x_1}(m_1)\cdots f_{x_{n-1}}(m_{n-1})}
 \nu(dm_1) \cdots \nu(dm_{n-1})
 \Bigg)
    \times
    \\
    \times&
 g_g^{(n)}(0, x_1, \ldots, x_{n-1})
 \text{d}x_1 \cdots \text{d}x_{n-1}
  .
\end{align*}
In particular, under independent marking the mark related integral within brackets reduces to 
$\int_{E_1}\cdots\int_{E_{n-1}}
t(m,m_1,\ldots,m_{n-1})
\nu(dm_1) \cdots \nu(dm_{n-1})$, whereby \eqref{eq:lo2} is given by the product of this term and a term measuring intensity reweighted spatial interaction.

\subsubsection{Test functions for FMPPs}
Turning to the FMPP case, by choosing different test functions $t(\cdot)$ for the functional marks, we may extract different features. 
We here focus on pairwise interactions, i.e.\
$n = 2$. 

The test function $t$ is intended to reflect similarities between functions. Hence, a natural starting point would be a metric $t(f_1,f_2)= d(f_1,f_2)$ on the function space $\mathcal{M}$, which does not necessarily need to be the underlying assumed metric on $\mathcal{M}$. 
The first candidate that comes to mind is an $L_p$-distance: 
\begin{equation}
    t(f_1,f_2)=\Bigg(\int_a^b|f_1(t)-f_2(t)|^p\text{d}t\Bigg)^{1/p},
    \quad 1\leq p\leq\infty,
    \label{eq:testl}
\end{equation}
where $p=\infty$ represents the supremum metric. 
For any choice of $p$ in \eqref{eq:testl}, similarity between functions implies a small value of the test function. Other tentative functions are semi-metrics based on the $L_p$ distance between the $s$-th derivatives of 
the 
functions, for different combinations of $p$ and $s$, with the $L_1$ and $L_2$ distances being particular cases, and semi-metrics based on functional principal component analysis.

A further alternative 
is the functional marked counterpart of the
test function for the classical variogram, given by
\begin{equation}
    t(f_1,f_2)=\int_a^b(f_1(t)-\Bar{F}(t))(f_2(t)-\Bar{F}(t))\text{d}t,
    \label{eq:testf}
\end{equation}
with $\Bar{F}(t)=(1/n)\sum_{i=1}^n f_i(t)$ being the average functional mark at time $t$ for the observed functional part of the point pattern; such averaging is motivated by the assumption of a common mark distribution.


\section{Local test for random labelling} \label{sec:test}

Simple hypotheses for spatial point patterns, such as Complete Spatial Randomness, are commonly tested using an estimator of a global summary statistic, e.g., Ripley's $K$-function. In this context, one typically resorts to Monte Carlo testing.
The first step is then to generate $Q$ simulations under the null hypothesis, and to
estimate the chosen summary statistic for both the observed pattern and the simulations. 
In order to study whether there is \textit{random labelling} in a (functional) marked point process,  the simulations are obtained by permuting the (functional) marks, that is, randomly assigning them to the spatial points of the ground pattern, which are kept fixed. Then, the chosen summary statistic is estimated for each of these permutations and global envelopes at a given nominal level are generated based on them. The result of the test can be assessed graphically: if the summary statistic estimate for the observed pattern 
exits 
the envelopes, we proceed with the assumption that the 
underlying FMPP is   
not randomly labelled.
Furthermore, it is possible to calculate a 
$p$-value based on the position of the observed summary statistic within the $q$th envelopes, following \cite{myllymaki2017global}. 
We know, however, that the conclusion drawn from the application of the above-mentioned global test 
pertains 
to the whole analysed process, indicating whether all the functional marks 
are randomly labelled or not.
Motivated by the will to further detect the specific points, and regions, where the functional marks really do depend on the other marked points, 
we propose a \textit{local test for random labelling}. The main idea is to run a global envelope test on each point of the analysed pattern by means of the previously proposed \textit{local t-weighted marked inhomogeneous $K$-functions}, to draw different conclusions about the individual points, based on the obtained $p$-values. 
In Algorithm \ref{alg:1} we outline the proposed local test. Note that we alternatively may use sampling without replacement in step \ref{step:resample} of Algorithm \ref{alg:1}. Moreover, if convinced that multiple testing issues are present here, one may adjust the type I error probability $\alpha$ by using e.g.~the Holm-Bonferroni method.
\begin{algorithm}
\caption{Local test of random labelling}\label{alg:1}
  \begin{algorithmic}[1]
  \State Set a fixed nominal value $\alpha$ for type I error;
   \State Consider a (functional) marked point pattern $\mathbf{x}=\{(x_j,m_j)\}_{j=1}^k$, $k\geq1$; 
   \State Set a number of simulations, $Q\geq1$;
   \For{each $q = 1, \ldots, Q$:}
        \State\label{step:resample} Randomly sample $k$ (functional) marks, with replacement, from the original $k$ ones; 
        \State Denote the resulting point pattern by  $\mathbf{x}_q=\{(x_j,m_j^q)\}_{j=1}^k$;  
         \EndFor    
      \For{each 
      $j=1,\ldots,k$,}
      \State Compute $L_n^{(j,q)} = \{\hat{\mathcal{K}}_t^{(x_j,m_j^q) \times_{i = 1}^{n-1} E_i}(r; \mathbf{x}_q)\}_{r\in[0,r_{max}]}$ for all $q=1,\ldots,Q$;
    
 \State 
    Apply global envelope testing, 
    using the functions $L_n^{(j,q)}$, $q=1,\ldots,Q$, to generate the envelopes;
    \State Obtain a $p$-value $p_j$ from the test;
    \State\label{setp:reject} Reject the null hypothesis for the $j^{th}$ point 
    if $p_j \leq \alpha$.
    \EndFor 
    
    \end{algorithmic}
    \end{algorithm}


\section{Motivating example and simulation study}\label{sec:sims} 

This section is dedicated to simulation studies to assess the performance of our proposed local test. First, section \ref{sec:extest} provides a motivating example of the use of such a test, by means of simulated data resembling seismic events, which in turn have motivated this work. In particular, this means simulating the functional marks as seismic waveforms, following the typical abrupt change in variance of the signal in correspondence with the arrivals of the first P- and S-waves.
Then, section \ref{sec:ex_sims} presents an extensive simulation study, showing diverse and more general settings. Specifically, we assess the performance of the test by summarising the results in terms of classification rates.

\subsection{The need for a local test} \label{sec:extest}


We simulate a homogeneous spatial point pattern with 250 points on the unit square, $W=[0,1]\times[0,1]$, which represents the ground pattern. For each ground point $x_i$, we simulate a functional mark of the from
	\begin{align*}
		f_i(t) =& y(t) = \mu(t)+\epsilon(t), 
		\qquad t\in\mathcal{T}= [0,1], 
		\\
		\epsilon(t) \sim & N(0,\sigma(t)^2),
		\\ 
		\sigma(t)^2 =& 0.2+7.5\mathbf{1}\{t>0.4\}-5\mathbf{1}\{t>0.6\}, 
	\end{align*}
	where the mean signal $\mu(t)$ is taken to be zero.
The spatial 
ground point pattern
and the corresponding waveform for a given point are shown in Figure \ref{fig:sim1}(a)-(b). 
Since the marks/waveforms are simulated from the same 
model, and independently of each other and the spatial locations of the points, we see that such a process is indeed random labelled.

\begin{figure}[htb]
	\centering
\includegraphics[width=.57\textwidth]{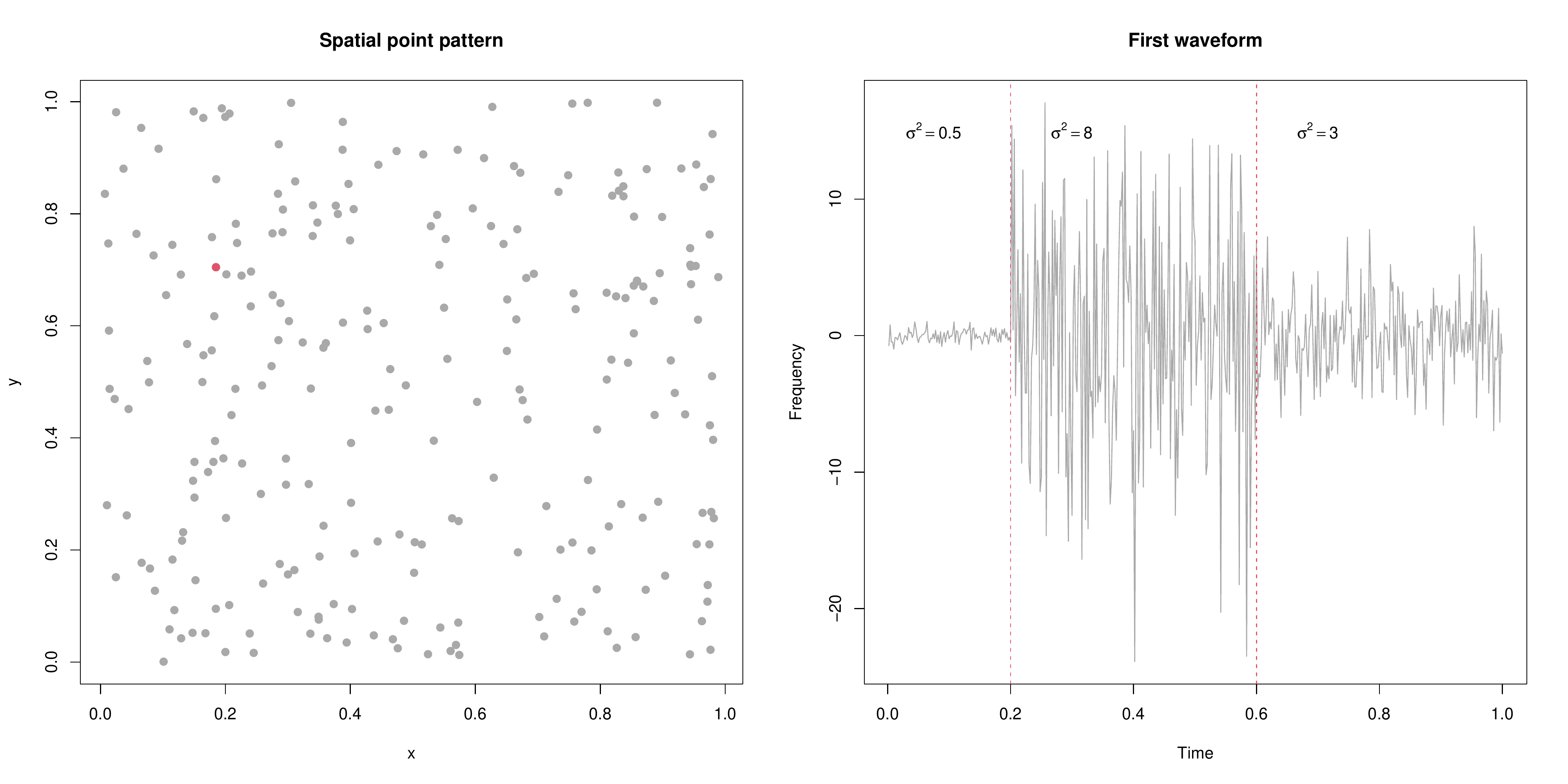}
\includegraphics[width=.42\textwidth]{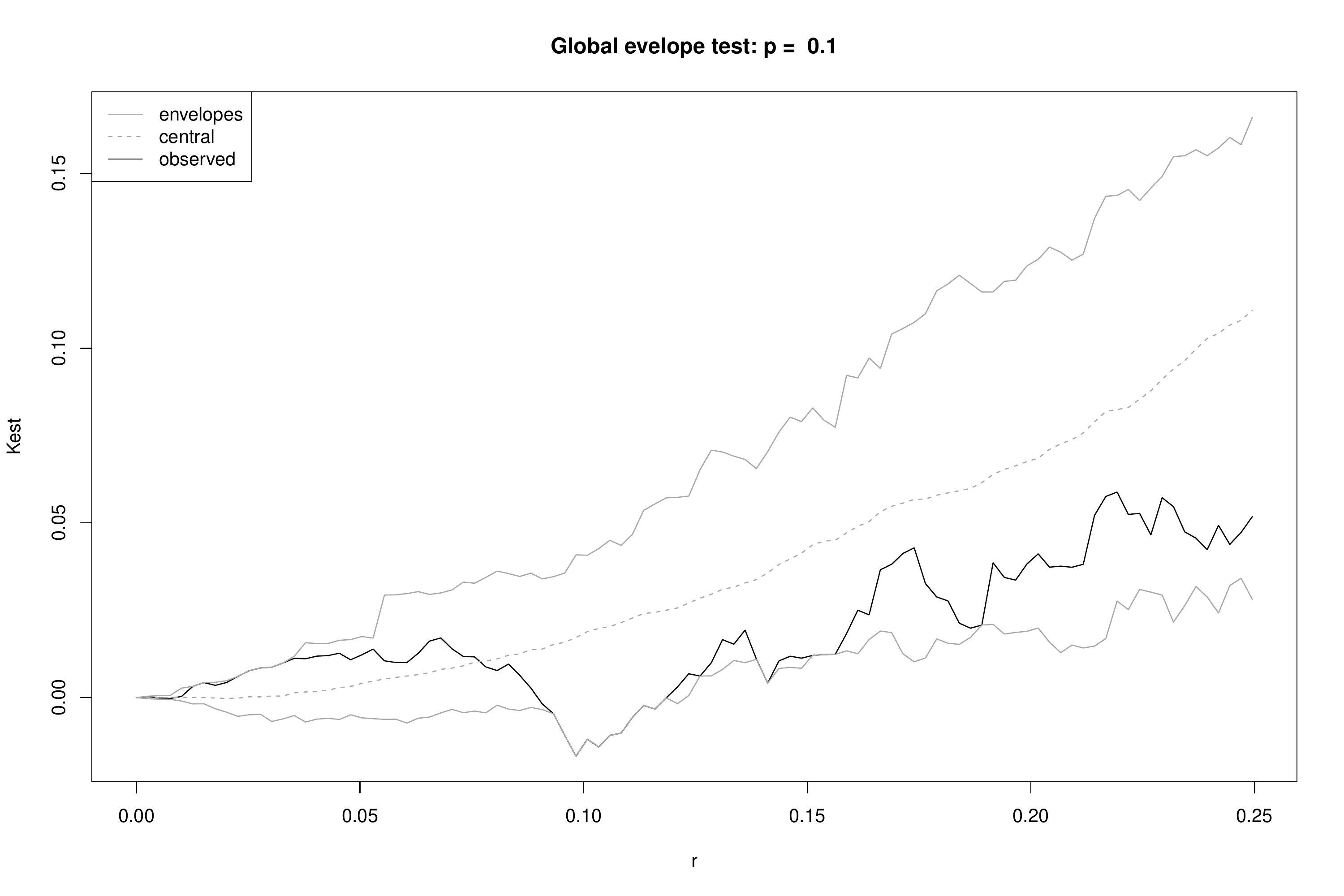}
	\caption{Simulated earthquake locations (a);  Simulated waveform  attached to the red point on the left (b); Result of global test (c).}
	\label{fig:sim1}
\end{figure}

Having generated the data, we first run a \textit{global envelope test for random labelling}, by randomly permuting the simulated waveforms, i.e.~the functional marks, keeping the location of the points fixed. 
We run the test by means of the t-weighted marked $n$-th order  inhomogeneous $K$-function of \cite{ghorbani2021functional}, with $n=2$, making it a second order summary statistic, and $t$ given by the test function \eqref{eq:testf}, i.e.\ the functional marked counterpart of the test function for the classical variogram. 
As previously mentioned, we assume that there is a common mark distribution which coincides with the reference measure on the mark space so that the intensity function is estimated by the ground process intensity estimate.  To be as objective as possible, we do not use the homogeneous intensity estimator $\widehat\rho_g(\cdot)=Y_g(W)/ \vert  W \vert $ here but instead we use a kernel intensity estimator, as in practice it would be unknown to us whether the actual ground process is (in)homogeneous. 
We use a Gaussian kernel intensity estimator $\widehat\rho_g(\cdot)$, where we select the bandwidth, $h$, according to \cite{cronie2018non}. More specifically, we minimise the discrepancy between the area of the observation window and the sum of reciprocal estimated intensity values at the points of the point pattern, i.e.\ we minimise $CvL(h) = (|W| - \sum_i 1/\hat\rho_g(x_i;h))^2$, where the sum is taken over all the data points $x_i$ and $\hat\rho_g(x_i;h)$ is the kernel intensity estimate with bandwidth $h$, evaluated in 
$x_i$. 
Then, once the bandwidth has been selected, the intensity estimate is corrected for edge effects through global edge correction (the option \texttt{diggle=FALSE} in the \texttt{spatstat} function \texttt{density.ppp}), i.e.~dividing the estimate by the convolution of the Gaussian kernel with the window of observation \citep{diggle1985kernel}. 
Finally, for $w$ we use Ripley’s isotropic edge correction in the summary statistic to correct for edge effects.
We repeated the procedure $39$ times, obtaining the result depicted in Figure \ref{fig:sim1}(c). 
We stress that our approach seems to be robust with respect to the bandwidth specification, i.e.~the choice of bandwidth selection approach plays a minor role for the final result.



As evident from Figure \ref{fig:sim1}(c), the observed summary statistic completely lies within the envelopes, and this confirms the expected result of lack of spatial dependence/structure of the functional marks. 
This result is further corroborated by the non-significant $p$-value, equal to $0.1$.

\subsubsection{Simulating spatially dependent functional marks}

To make the functional marks spatially dependent, we then superimpose a homogeneous spatial point pattern with $50$ points, generated in the $[0,0.5] \times [0,0.5]$ square, i.e. the bottom left region of the entire study region $W$. For these additional points, we generate different functional marks than before, namely with the underlying trend $\mu(t)=10+6\sin(3\pi z_t)$. Consequently, we have simulated a FMPP with spatially varying functional marks, i.e. not random labelled. We therefore expect a global test of random labelling to confirm this.

\begin{figure}[htb]
	\centering
\includegraphics[width=.425\textwidth]{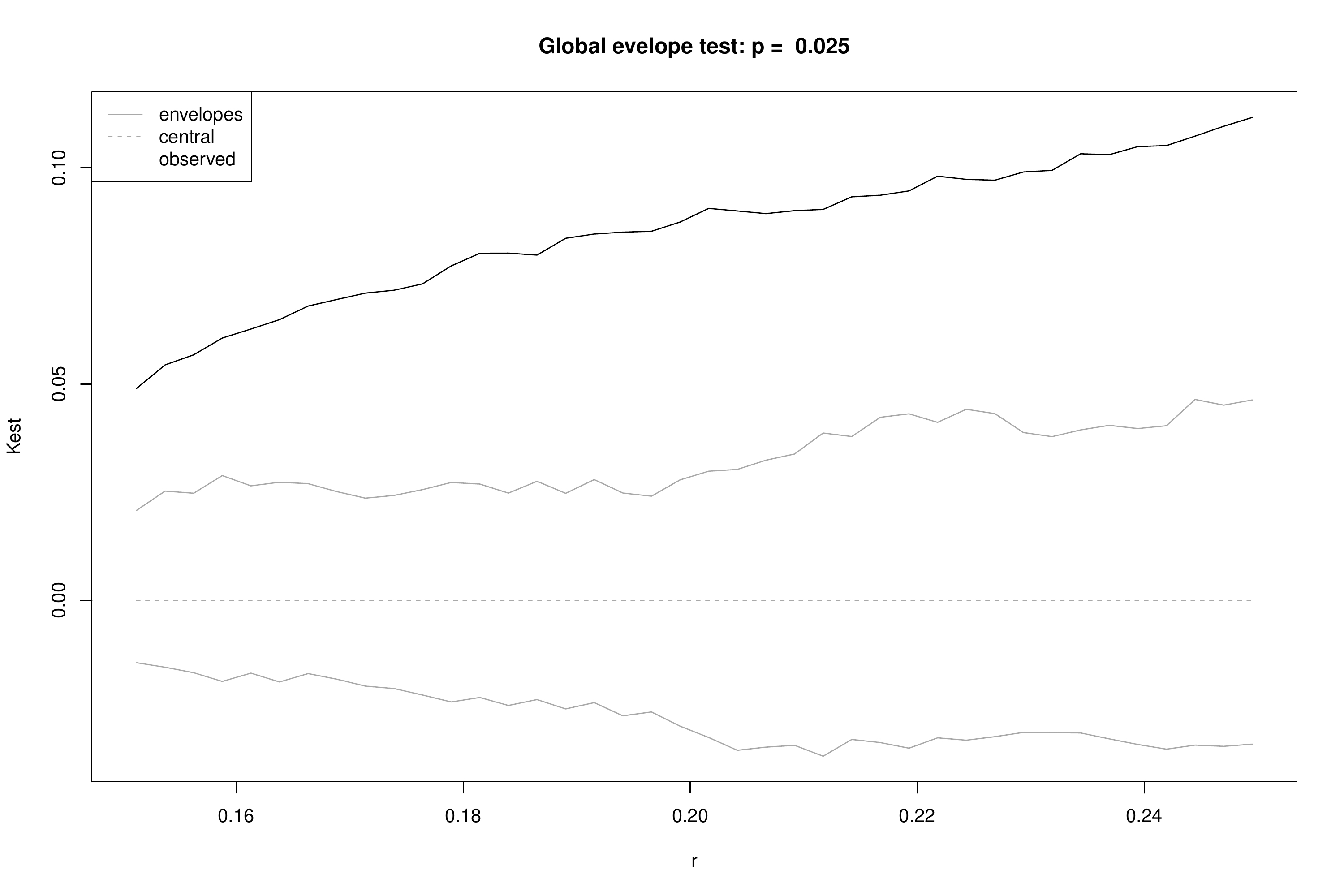}
\includegraphics[width=.275\textwidth]{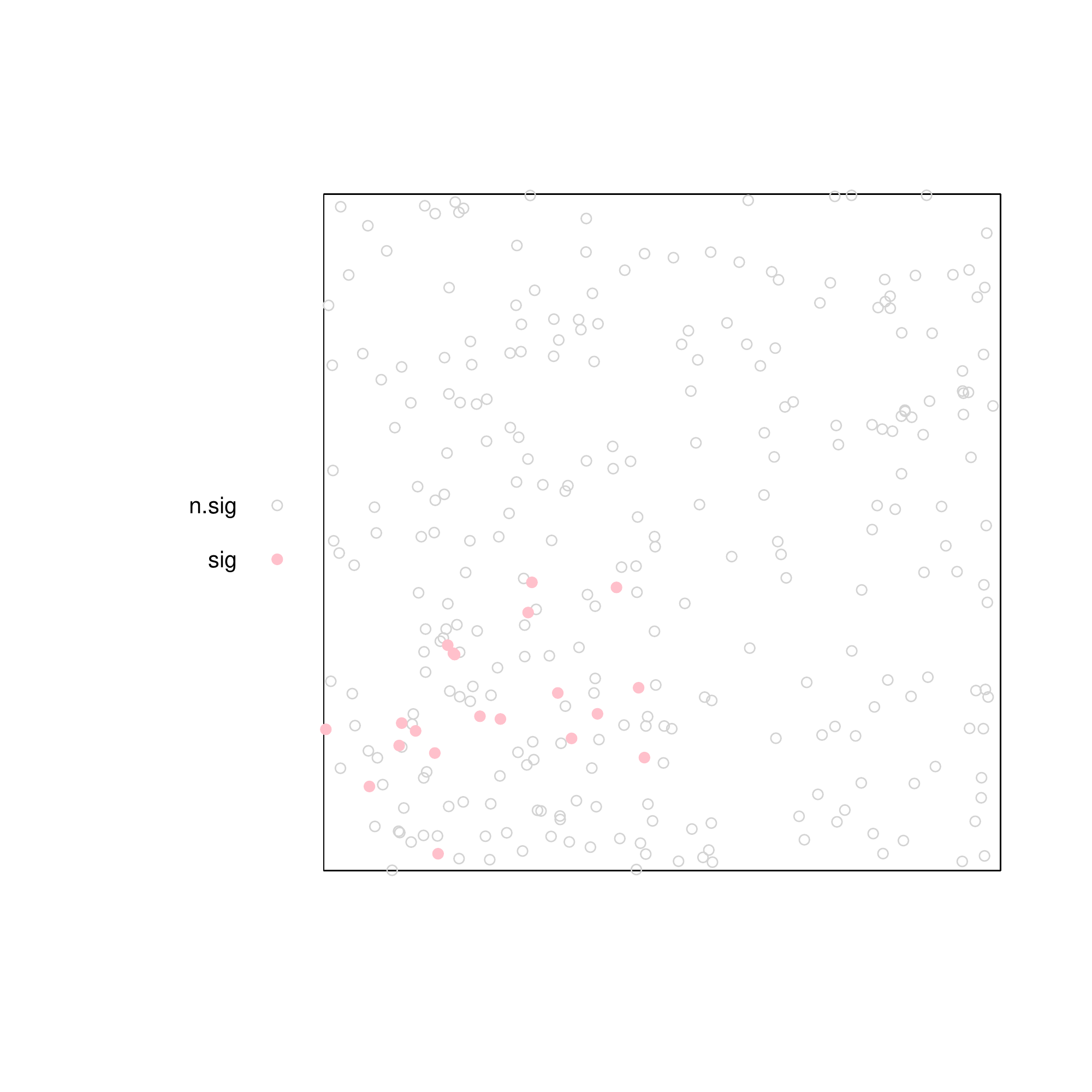}
	\caption{ 
	Result of global test for the spatially dependent simulated data (a);
	Output of the local test: significant points for which the hypothesis of random labelling is rejected are in pink (b).}
	\label{fig:5}
\end{figure}



We first run the same global test of random labelling as before. 
We use $Q = 39$ and obtain a global $p$-value of $0.025$. This, together with the observed $K$-function lying outside the envelopes (Figure \ref{fig:5}(a)), indicates the ability of the global test to correctly detect the spatial dependence of the functional marks.

We know, however, that this conclusion should not be drawn for each point of the pattern, if we consider local restrictions of it, but specifically for those in the vicinity of the $[0,0.5] \times [0,0.5]$ square. 
We therefore proceed by running our proposed local test, based on the proposed second order local $K$-function $\hat{\mathcal{K}}_{t,2}^{(x,m)}(r)$, $r\in[0,r_{max}]$, in Definition \eqref{def:LocalK}, 
with the same choice of test function $t(\cdot)$ and the same intensity estimation scheme as for the global one.
Figure \ref{fig:5}(b) depicts the points of the simulated point pattern, and it displays in pink those points for which the local test came out significant. Hence, this illustrates that the proposed local test is able to correctly identify some of the points, and consequently some parts of the region, where the hypothesis of random labelling does not hold locally. 
Note that a universally preferable option for  $r_{max}$ does not exist.
In this paper, it is set to $\min(x_W,y_W)/4$, where $x_W$ and $y_W$ represent the maximum width and height of the observation region $W$, respectively; note that this rule of thumb is supported by \cite{diggle:13}.
Indeed, changing the value of  $r_{max}$ has an impact on the final results, and we found that our choice provided the best compromise among the options.

\subsection{Extended simulation study}  \label{sec:ex_sims}

This section aims to study the proposed method's performance in terms of classification rates considering different scenarios, concerning both the ground processes and the functional marks' structures. 
To this end, we simulate under different such scenarios, to obtain a comprehensive understanding of the results of the local test in different settings.

In detail, we consider three types of ground process structures, all with an expected point count of 200: (1) a homogeneous Poisson process; (2) an inhomogeneous Poisson process with intensity function $\rho_g(x)=\rho_g(x_1,x_2)=\exp(3.5 + 3x_2)$, $x\in W$; (3) a Thomas process, with intensity of the Poisson process of cluster centres equal to 25, standard deviation of random displacement of a point from its cluster centre equal to 0.05, and mean number of points per cluster equal to 7.
They are all generated in $W$, i.e.\ the unit square, and will be referred to as the \textit{base patterns}. 
Then, we superimpose additional simulated patterns in the $[0,0.5] \times [0,0.5]$ square, coming from the same generating processes, but with an expected number of points of 50; hereby the expected total number of points on $[0,0.5] \times [0,0.5]$ is $50+200/4=100$ and on its complement it is $150$. These additional patterns will be referred to as \textit{feature patterns}.

As for the functional marks, we consider the time domain $\mathcal{T}=[0,10]$ and, practically, we sample each simulated mark function in 100 equally spaced time points in $\mathcal{T}$. 
We assume that each functional mark satisfies $f_i(t)=Z(x_i,t)$, where $x_i$ is the $i$th ground point and 
\begin{align}
\label{e:BaseModel}
Z(x,t) = \mu + \xi(x,t), \quad (x,t)\in W\times\mathcal{T}, 
\end{align}
for a zero-mean stationary Gaussian random field $\xi$ with covariance function $C(h,u)$; here $h$ and $u$ denote the spatial and the temporal lags, respectively. For the base patterns, we consider $\mu=5$ and a pure nugget effect model with covariance function $C(h,u)=\sigma^2\mathbf{1}\{h=0\}$, $\sigma^2=0.01$. In other words, each $f_i$ is random noise with mean 5 and variance 0.01 and all $f_i$'s are iid; see the grey curves in the bottom panel of Figure \ref{fig:sims}. 
For the feature patterns, we consider three different marking models:
\begin{enumerate}
    \item \label{itm:first} Shifted base model: We here let $\xi$ have the same form as in the base model but let $\mu=5.5$.
    \item \label{itm:second} Decreased variance base model: We here let $\xi$ have the same form as in the base model but let $\sigma^2=0.001$.
    \item \label{itm:third} Non-separable space-time model: We here let $\mu=5$ and consider a space isotropic 
    covariance function given by $C(h,u)= (\psi(u)+1)^{-\delta/2} \phi(h /\sqrt{\psi(u) +1})$. Here, $\phi$ is a normal mixture and the corresponding covariance function only depends on the distance 
    between two points, while  
    $\psi$ is a variogram model, which we choose according to a fractal Brownian motion with fractal dimension $\alpha = 1$; this is an intrinsically stationary isotropic variogram model. 
\end{enumerate}
We note that the first two of these scenarios represent independent but not identically distributed marks, whereas in the third scenario we additionally have that the marks are also dependent. 
A graphical representation of the scenarios comes in Figure \ref{fig:sims}. On the top panel, we display the three ground patterns. On the bottom ones, the corresponding functional marks are depicted.

\begin{figure}[htb]
	\centering
\subfloat{\includegraphics[width=.33\textwidth]{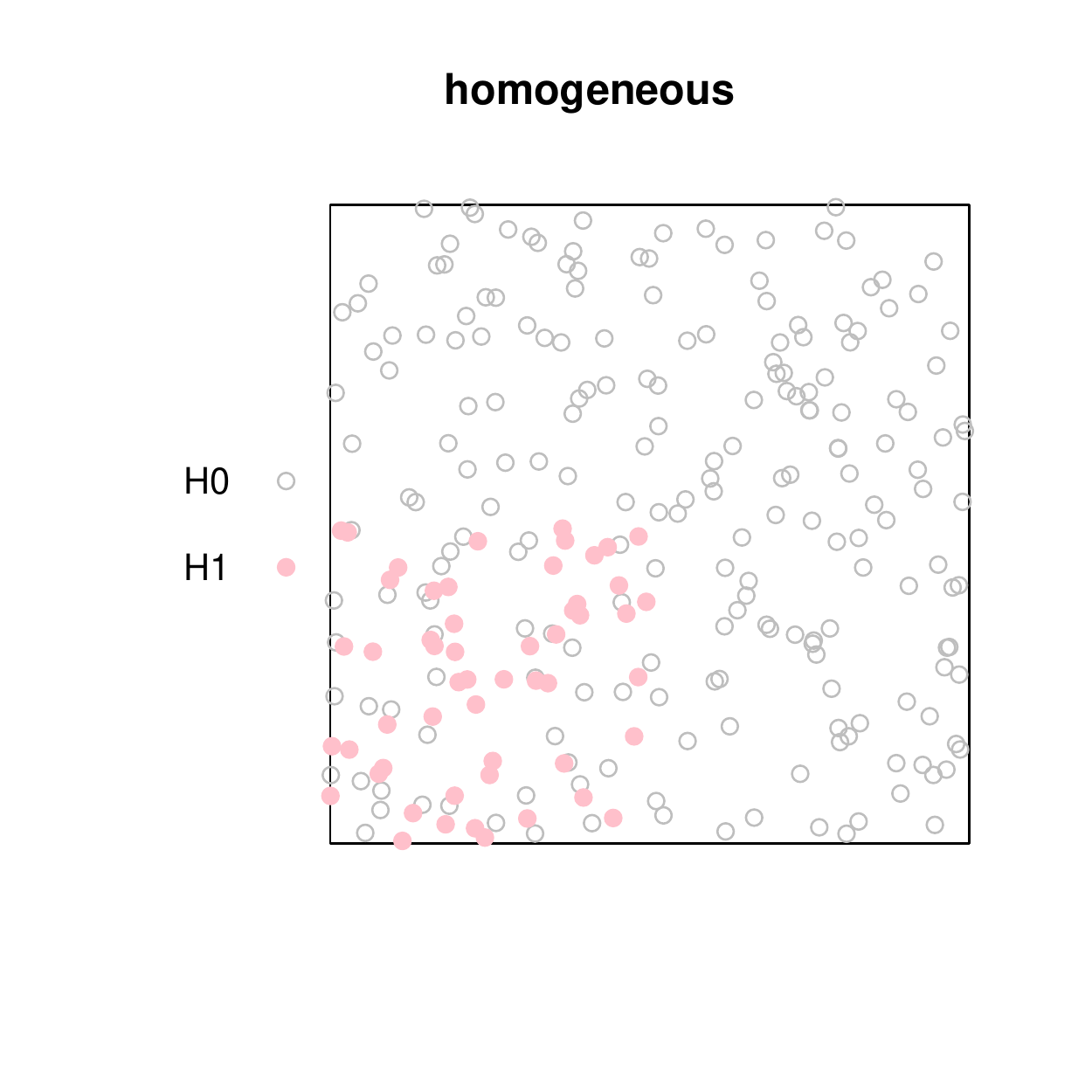}}
\subfloat{\includegraphics[width=.33\textwidth]{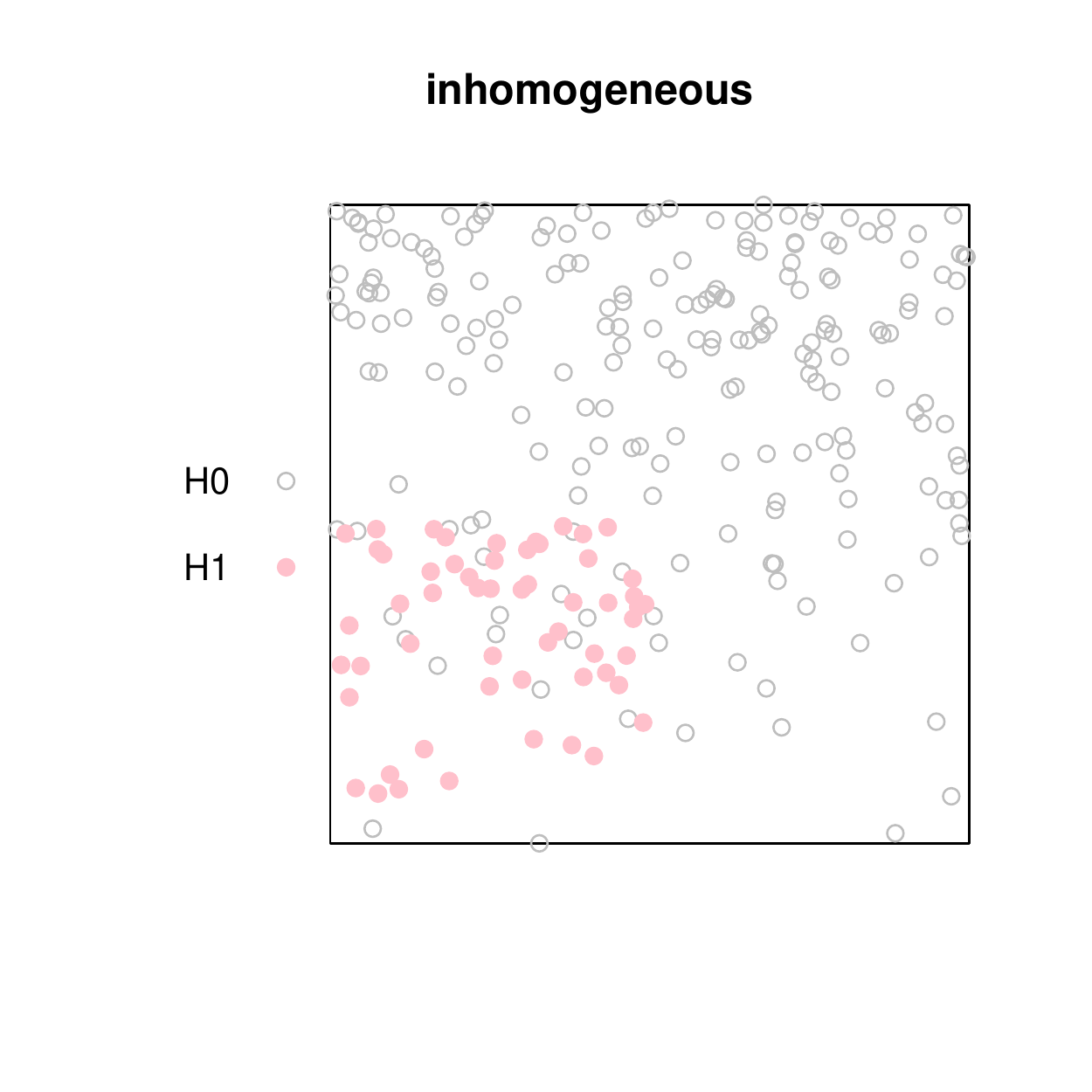}}
\subfloat{\includegraphics[width=.33\textwidth]{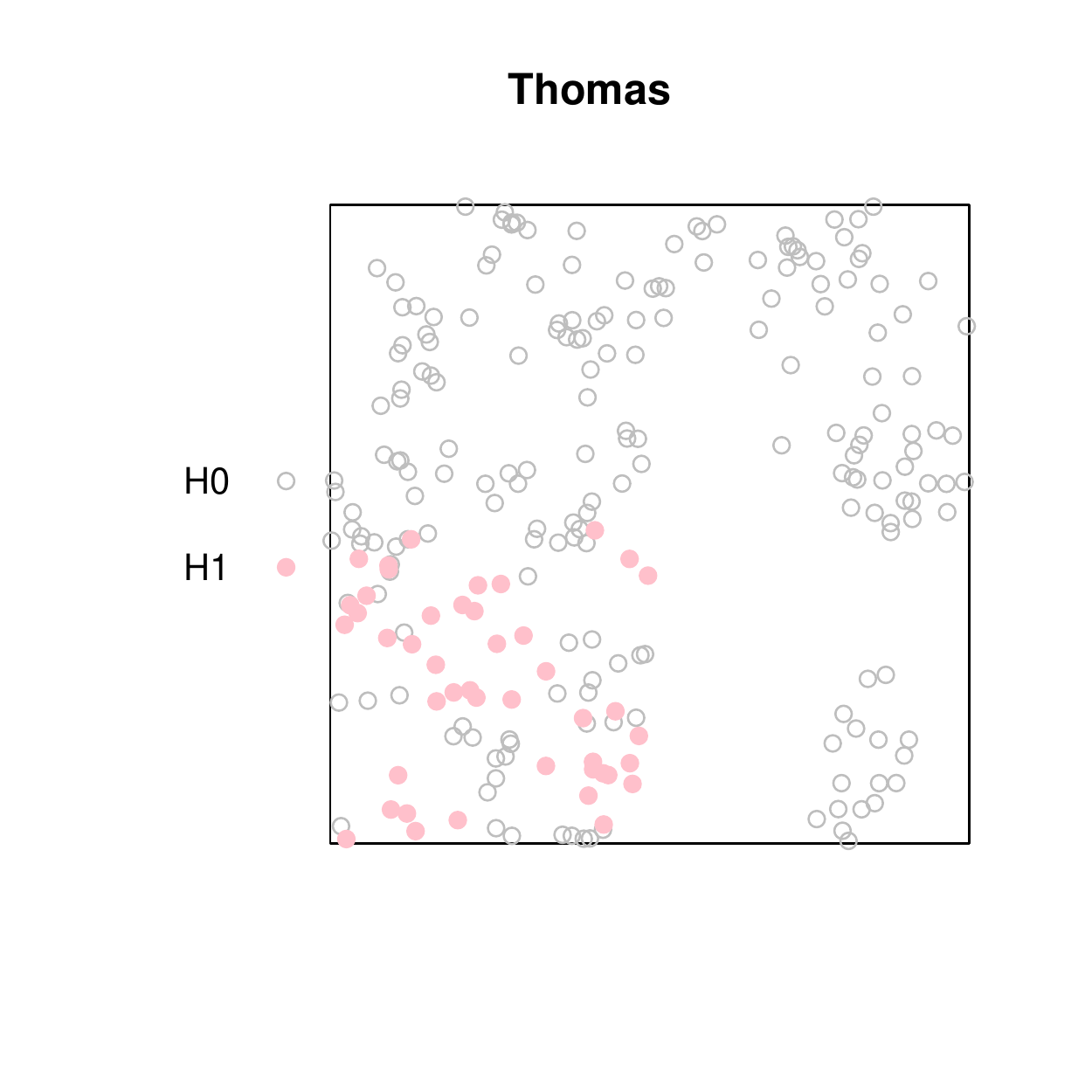}}\\
\vspace{-1.5cm}
\subfloat{\includegraphics[width=.33\textwidth]{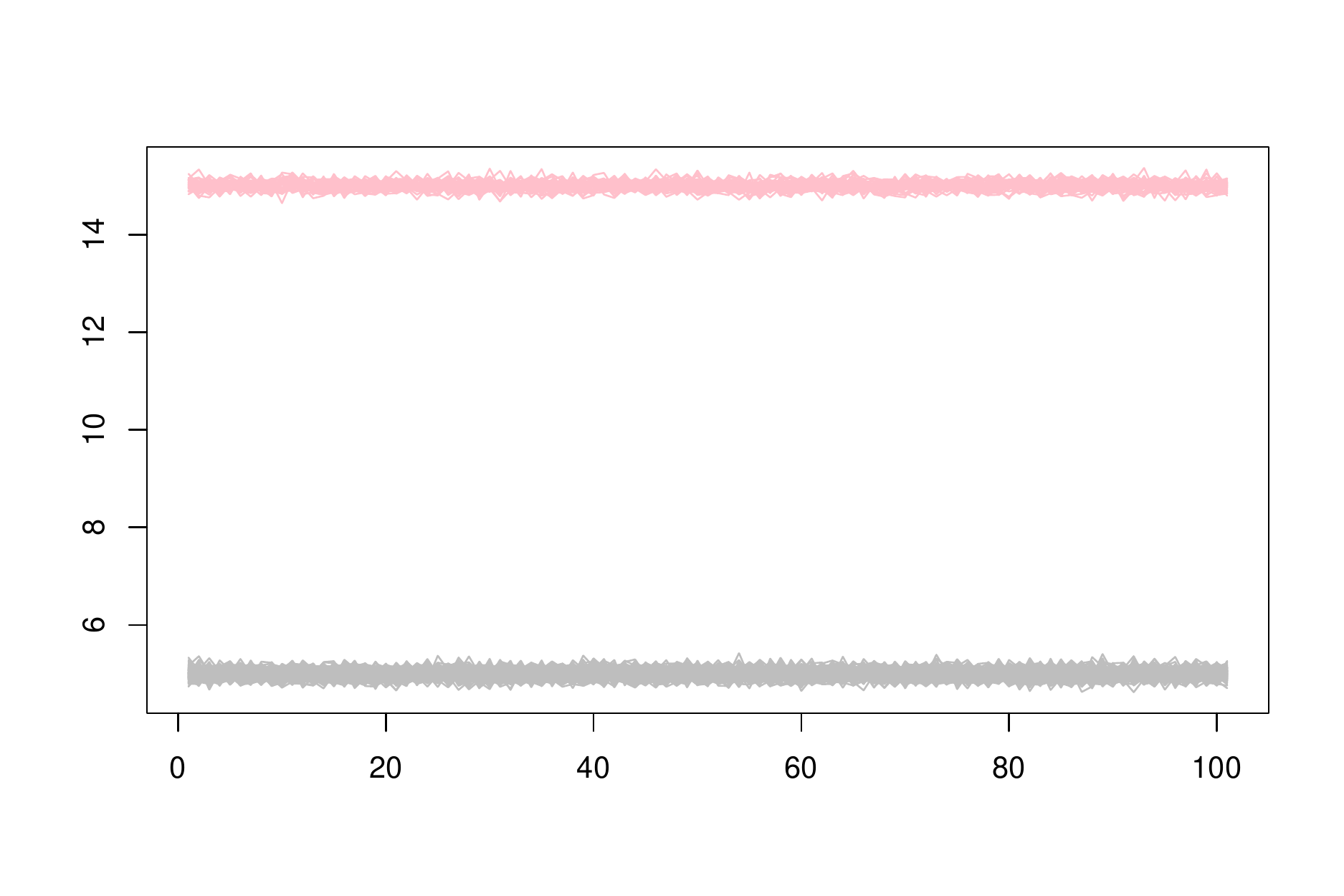}}
\subfloat{\includegraphics[width=.33\textwidth]{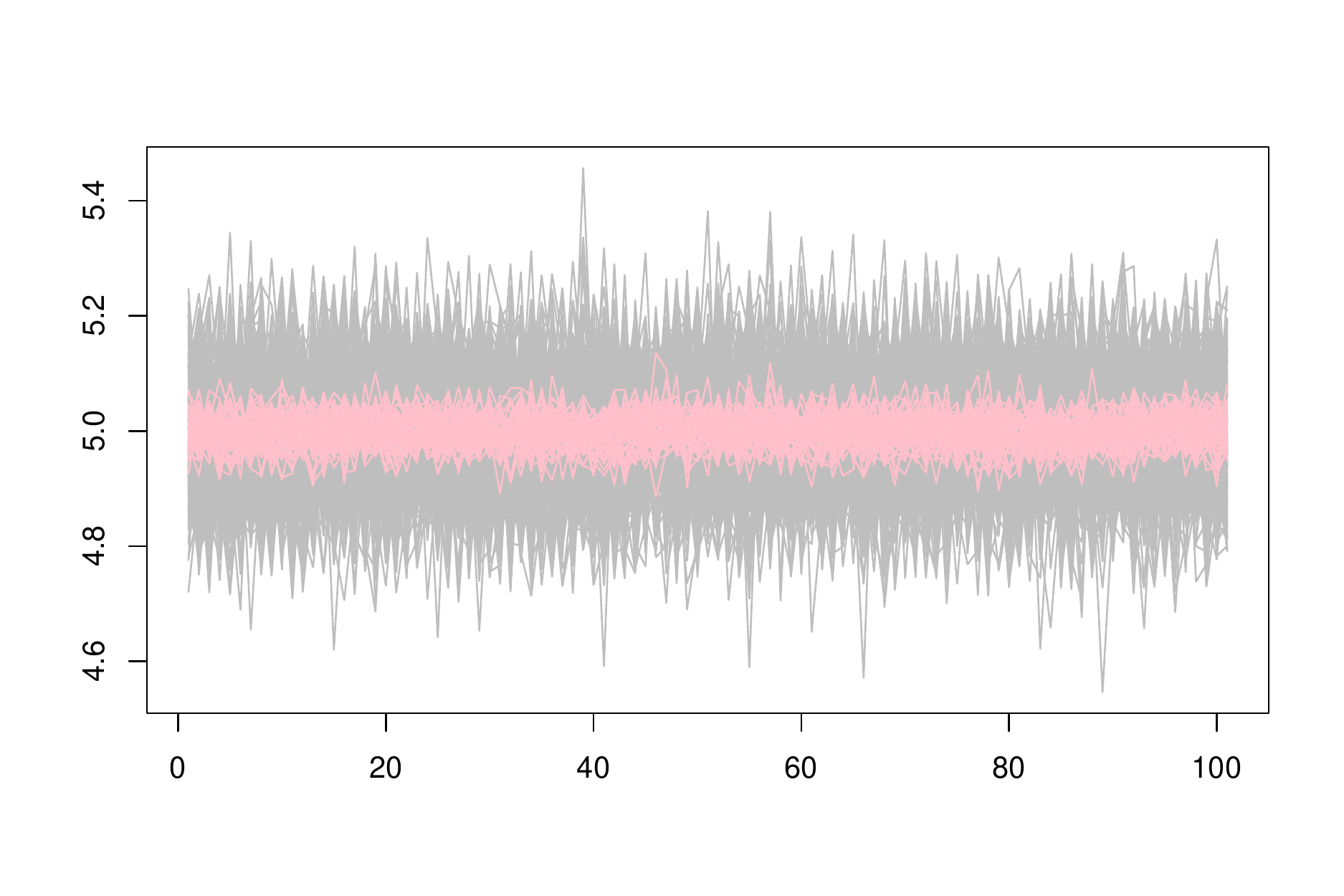}}
\subfloat{\includegraphics[width=.33\textwidth]{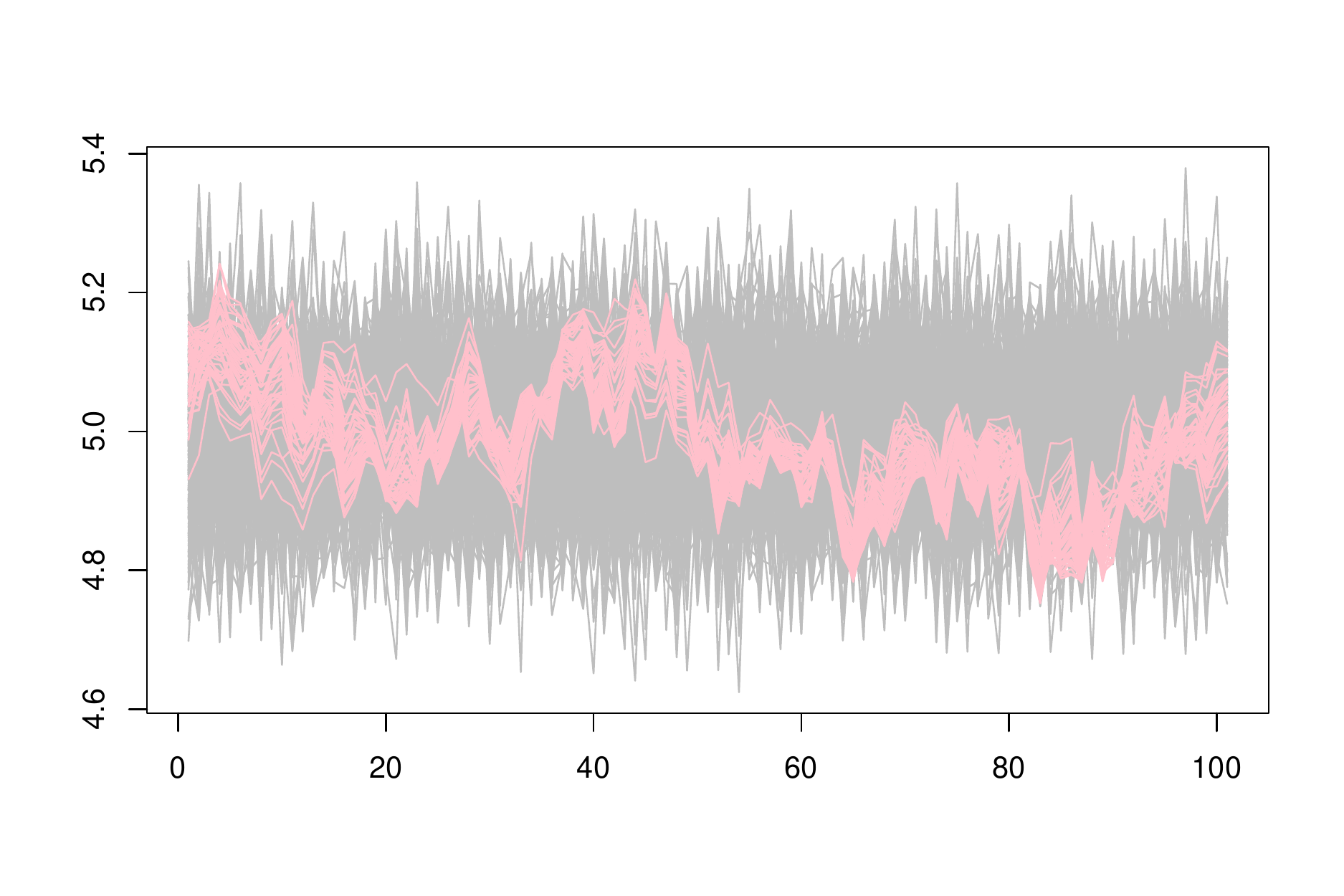}}
	\caption{Simulation scenarios. \textit{Top panels}: spatial ground patterns; \textit{Bottom panels}: functional marks of model \eqref{e:BaseModel} (in grey) and of the marking models in item \ref{itm:first}, \ref{itm:second}, and \ref{itm:third}, from left to right (in pink).}
	\label{fig:sims}
\end{figure}

We show the results of the local test in terms of true-positive rate (TPR), false-positive rate (FPR), and accuracy (ACC), averaging over 100 simulated point patterns in Table \ref{tab:res}.
The rates are defined as
\begin{equation*}
    TPR = \frac{\text{true positives}}{\text{positives}}, \quad FPR = \frac{\text{false negatives}}{\text{negatives}}, \quad ACC = \frac{\text{true positives and negatives}}{\text{positives and negatives}}.
\end{equation*}
We of course wish to have TPR and ACC close to 1 and FPR close to 0. 
 
As shown in Table \ref{tab:res}, the performance of the local test in terms of classification rates strongly depends on the difference in the functional marks. Specifically, changing only the mean of the underlying random field is not enough for properly identifying the points of the feature patterns. 
This sufficiently improves when changing the variance only, but the best result is obtained when the whole model is changed, that is, changing the correlation structure.
The effect of the type of ground pattern is less evident but still present. The inhomogeneous Poisson scenario reports the best classification rates, followed by the Thomas and homogeneous Poisson ones.

Finally, we found that the test function $t(\cdot)$ based on the $L_2$ distance in Equation  \eqref{eq:testl} gave the better results overall. To further explore how the choice of test function influences the test, we also compared to a test function incorporating a derivative function accounting for the shape of the functional marks. This yielded similar results but turned out to be more computationally demanding.

\begin{table}[htb]
\centering
\caption{Results of the local test averaged over 100 simulated point patterns with an expected point count of 250 each. 
}
\begin{tabular}{l|l|ccc}
 \toprule
Ground process&Marking model&TPR&FPR&ACC\\
  \midrule
Homogeneous Poisson&\eqref{itm:first}&0.112 &0.346 &0.583\\
Homogeneous Poisson &\eqref{itm:second}&0.583 &0.066 &0.820 \\
Homogeneous Poisson&\eqref{itm:third}&0.870 &0.024 &0.896 \\
\addlinespace
Inhomogeneous Poisson&\eqref{itm:first}&0.032 &0.585& 0.449\\
Inhomogeneous Poisson&\eqref{itm:second}&0.648& 0.084& 0.856 \\
Inhomogeneous Poisson&\eqref{itm:third}&0.895& 0.023& 0.932\\
\addlinespace
Thomas&\eqref{itm:first}&0.109 &0.394 &0.571\\
Thomas&\eqref{itm:second}&0.637 &0.088& 0.846\\
Thomas&\eqref{itm:third}&0.865 &0.025 &0.925\\
  \bottomrule
\end{tabular}
\label{tab:res}
\end{table}

\section{Real seismic data analysis}\label{sec:real}

We analyse data coming from the \textit{ISTANCE} dataset, presented in Section \ref{sec:data}.
More specifically, we analyse a sample dataset provided at \url{http://www.pi.ingv.it/instance/}. 
The dataset contains 10000 records of 300 events, together with the associated metadata. 
Figure 	\ref{fig:pre1} shows the earthquake locations and time occurrences.

\begin{figure}[htb]
	\centering
		\includegraphics[width=.35\textwidth]{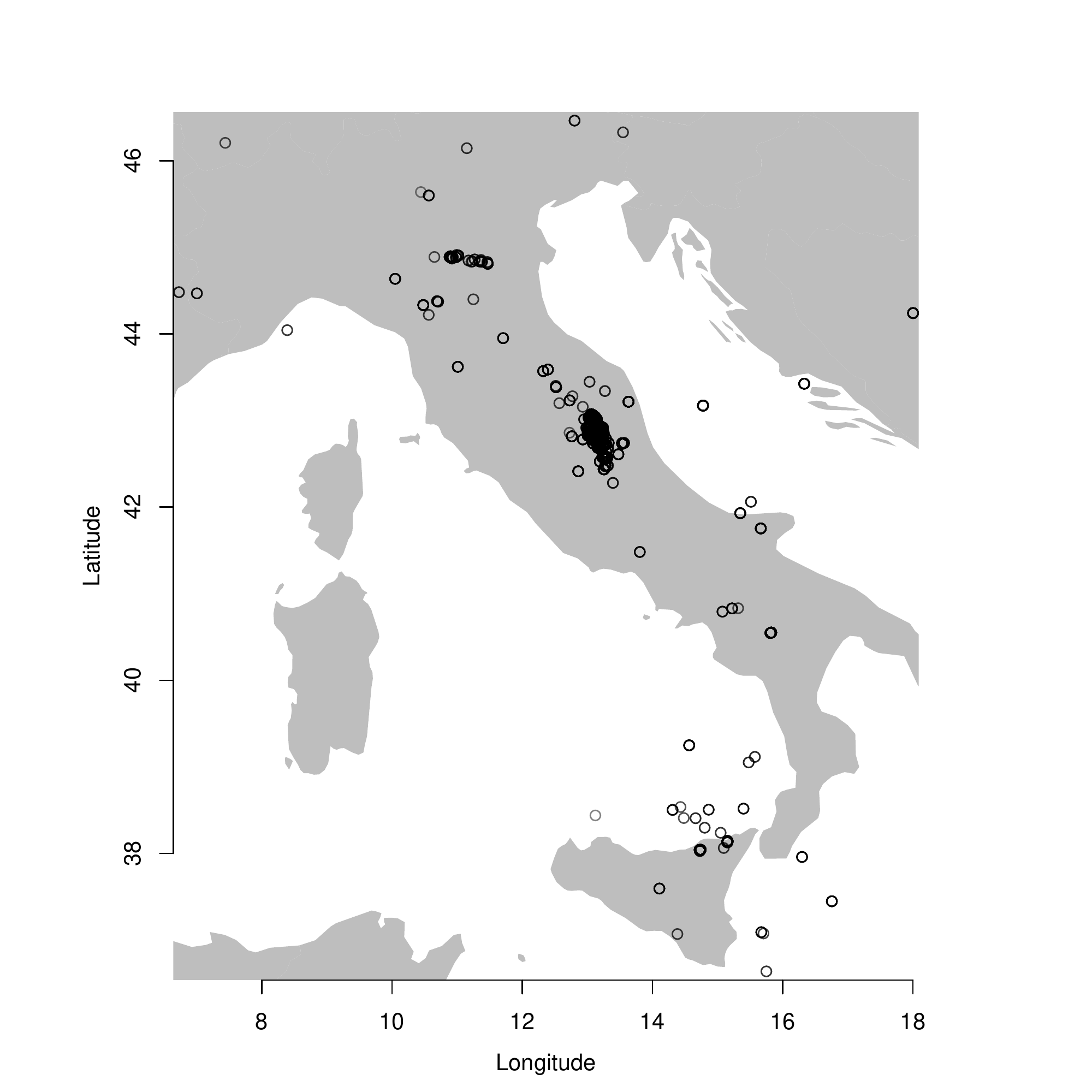}
			\includegraphics[width=.55\textwidth]{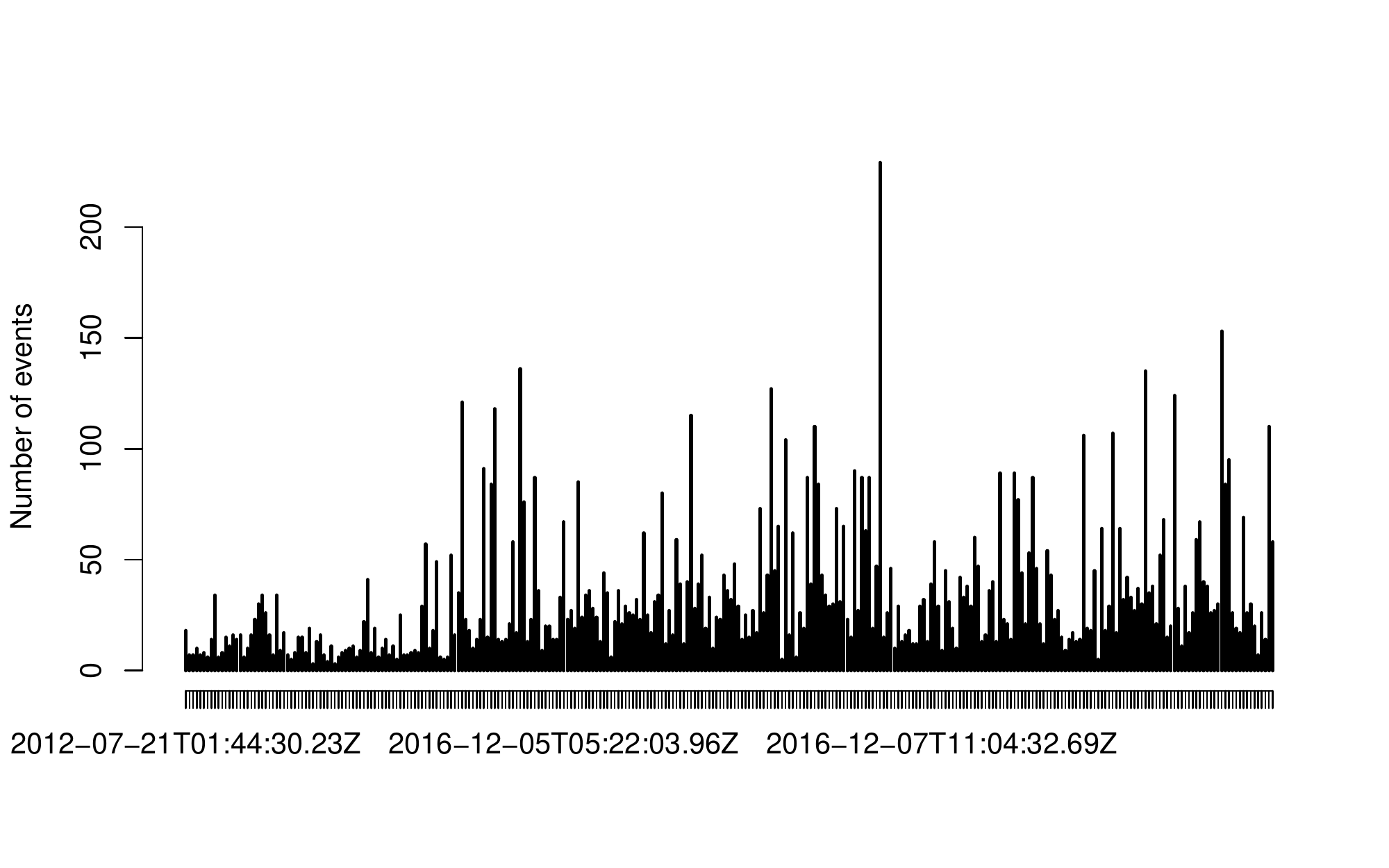}
	\caption{Earthquake locations and time occurrences}
	\label{fig:pre1}
\end{figure}

We first compute the proposed local $K$-function. The left panels of Figure \ref{fig:x cubed graph13} depict the estimated local summary statistics. In particular, the green lines represent the global statistics, while the grey ones represent the individual contributions. In blue we also represent the theoretical value. In the top-left panel, the $K$-function is based on a kernel intensity estimate whose bandwidth is selected by \cite{diggle:13}'s rule, while in the bottom-left panel the bandwidth is chosen as in \cite{cronie2018non}. We observe some relevant differences: while with \cite{diggle:13}'s rule we depict different local $K$-functions deviating from the global one, following \cite{cronie2018non}, we find a unique outlying local $K$-function. This may be explained by the fact that \cite{cronie2018non}'s approach tends to yield a bit too large bandwidths when large parts of the study region contain no points, while \cite{diggle:13}'s approach tends to yield too small bandwidths in general; see \cite{cronie2018non} for details. Note that by increasing the bandwidth we decrease the intensity estimate and, as a consequence, the summand denominators in \eqref{eq:lo2} are decreased. 
Therefore, we run the proposed local test of random labelling with both options for the bandwidth selection and, as expected, the differences observed in the computation of the local $K$-functions are reflected in the results of the test.

\begin{figure}[htb]
	\centering
	\subfloat{\includegraphics[width=.33\textwidth]{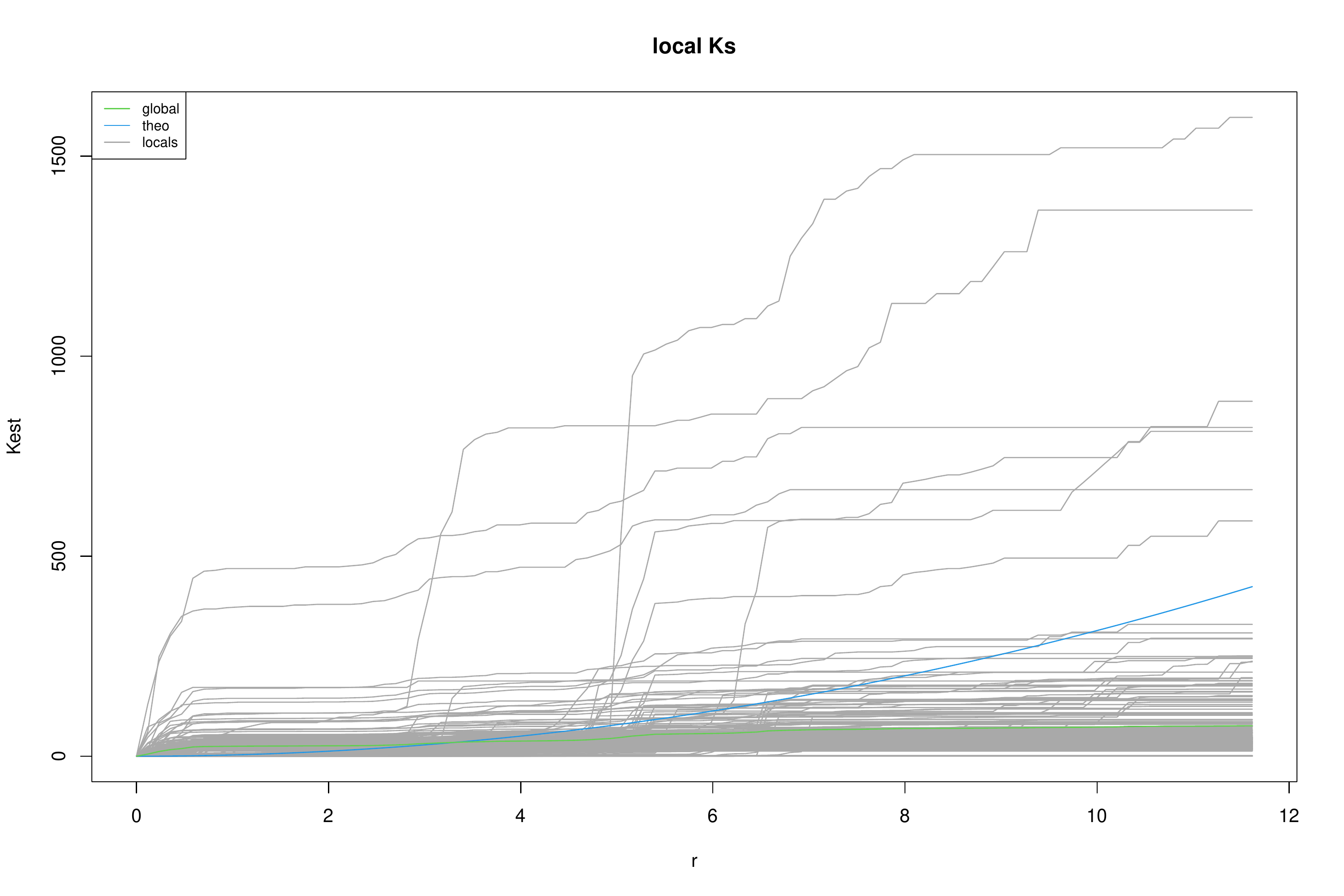}}
		\subfloat{\includegraphics[width=0.33\textwidth]{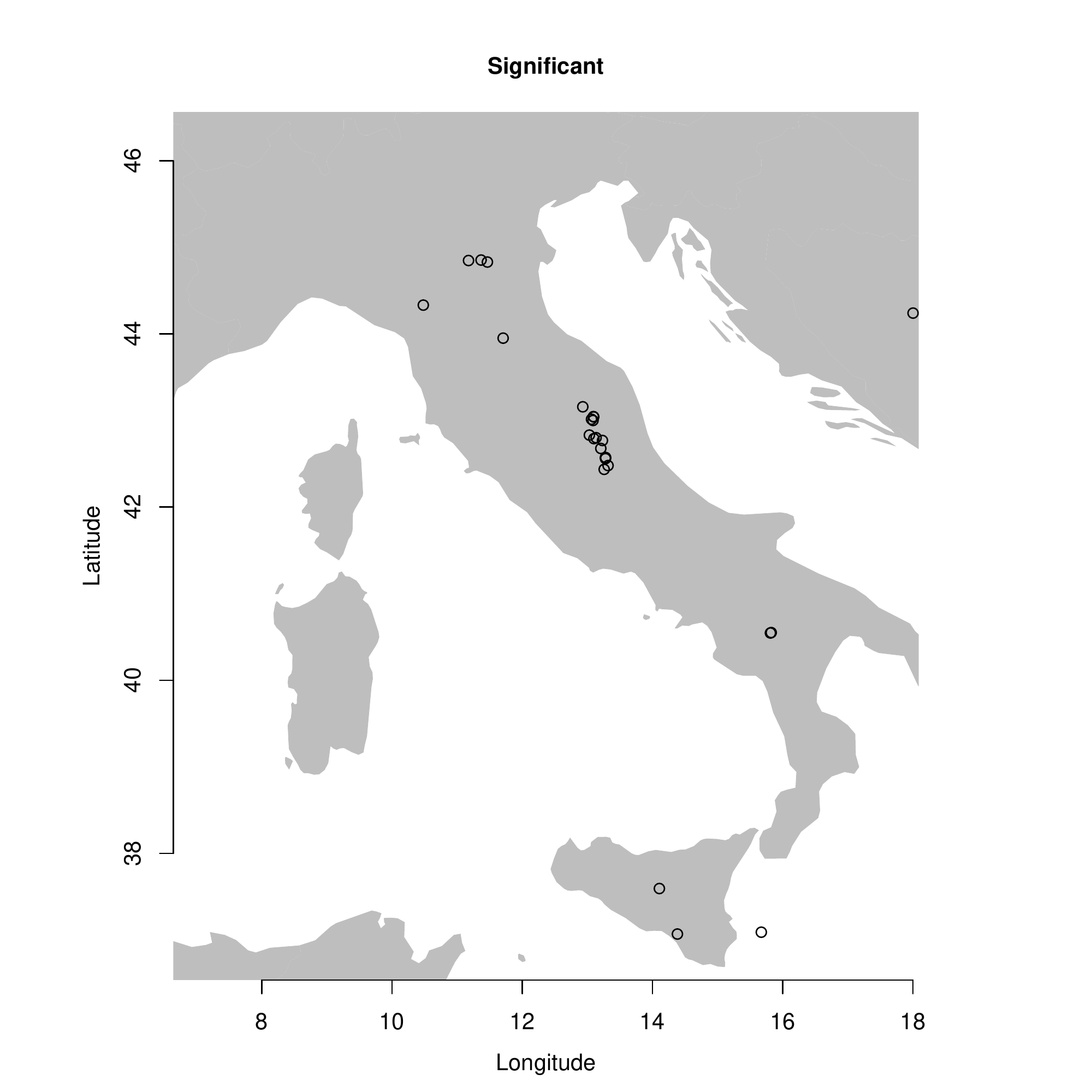}}
			\subfloat{\includegraphics[width=0.33\textwidth]{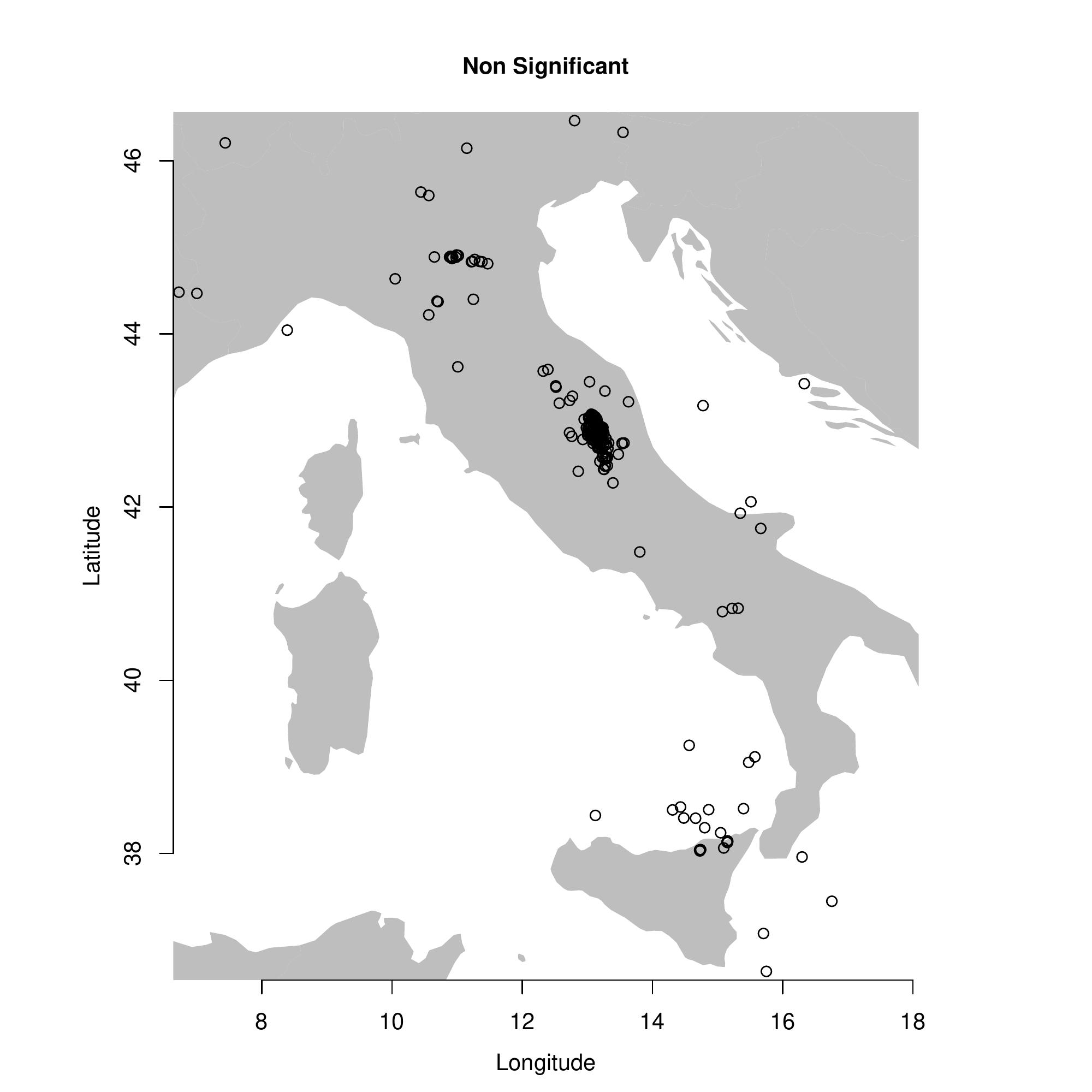}}
		\\
	\subfloat{\includegraphics[width=.33\textwidth]{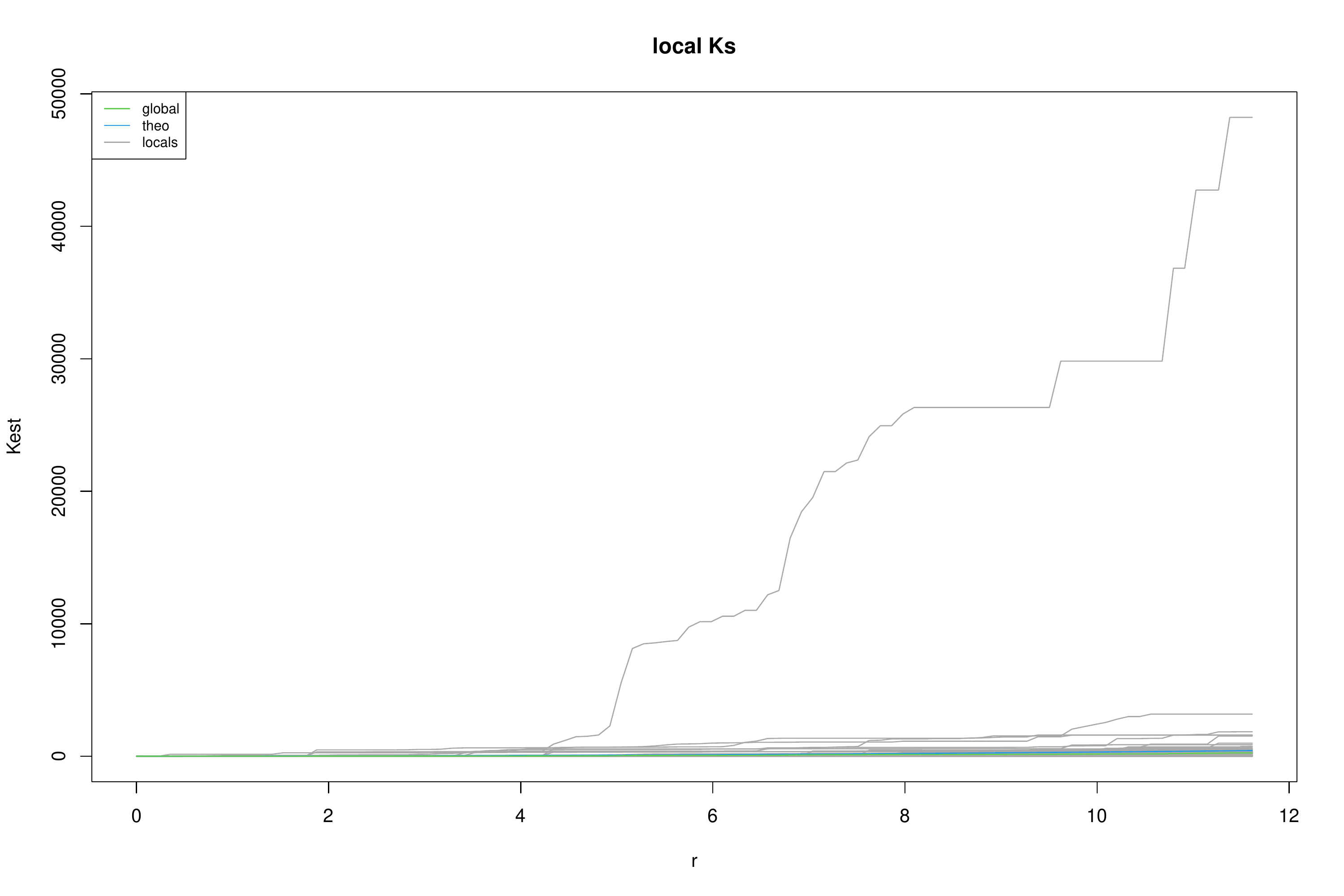}}
		\subfloat{\includegraphics[width=0.33\textwidth]{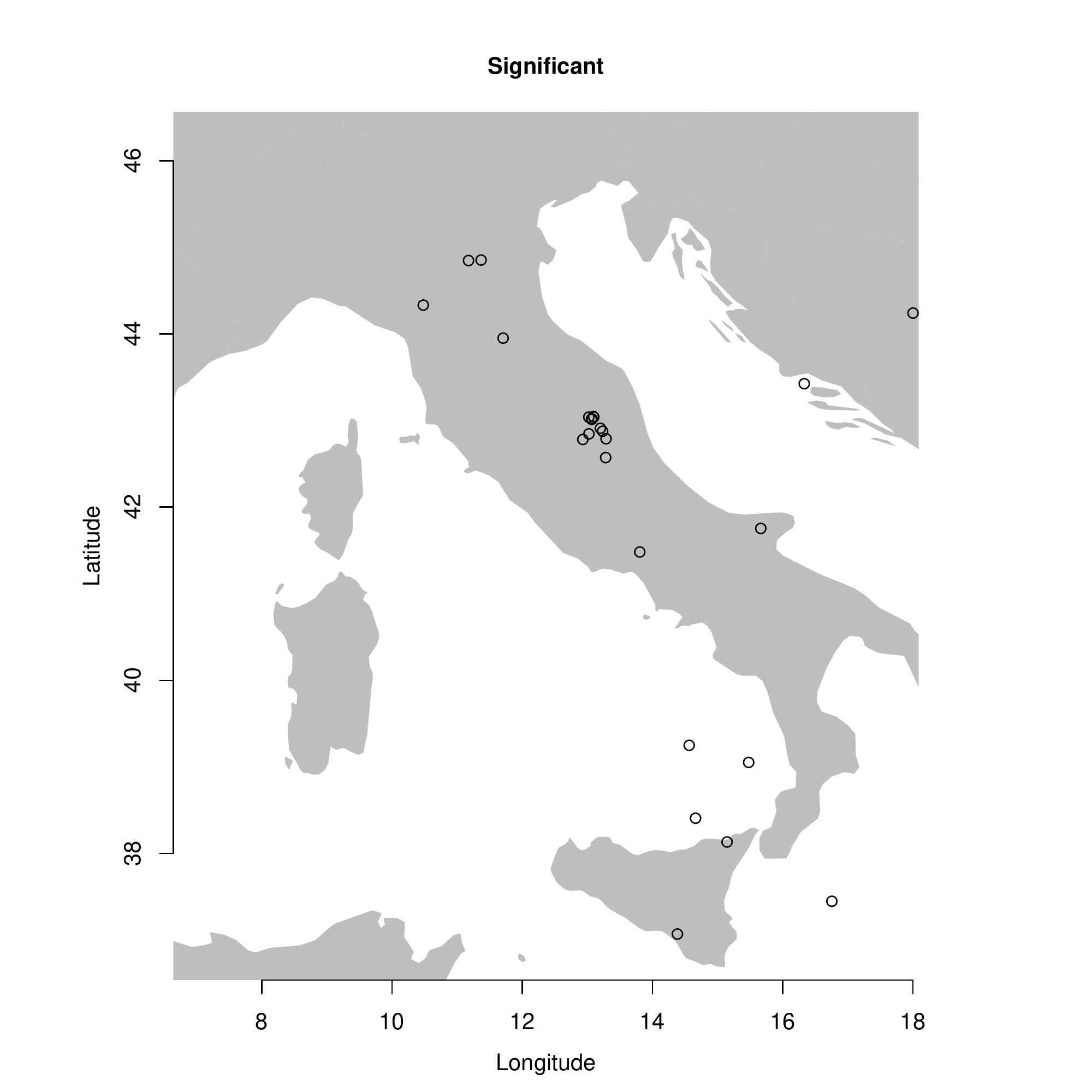}}
	\subfloat{\includegraphics[width=0.33\textwidth]{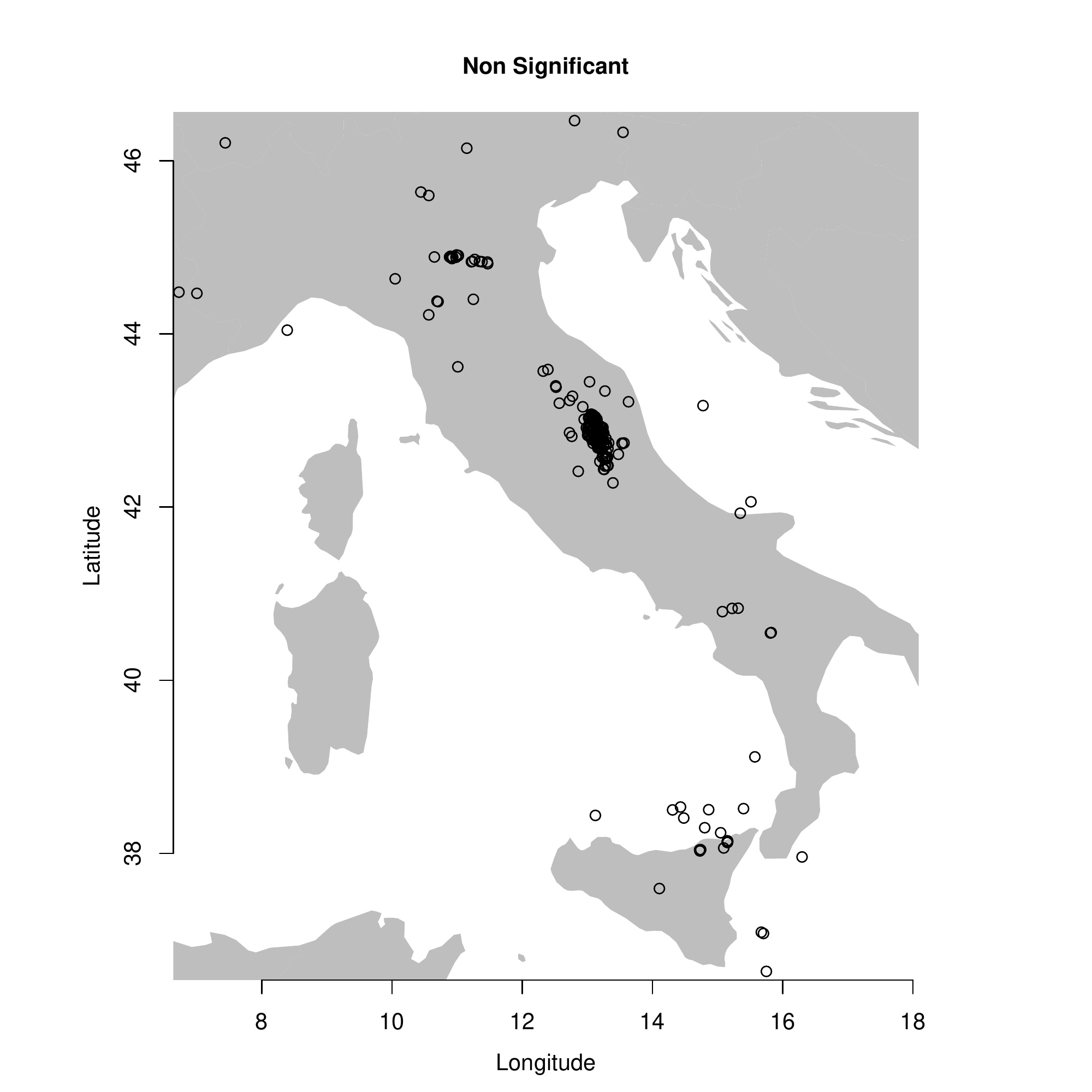}}
	\caption{ \textit{Left panels:} Local $K$-functions. \textit{Center and right panels:}  Results of the local test at $\alpha = 0.1$. \textit{Top panels:} The $K$-function is based on a kernel intensity estimate whose bandwidth is selected by \cite{diggle:13}'s rule. \textit{Bottom panels:} The bandwidth is chosen as in \cite{cronie2018non}.}
	\label{fig:x cubed graph13}
\end{figure}


Figure	\ref{fig:x cubed graph13} displays the significant points (center panels) and the non-significant ones (right panels).
The top panels show the results with \cite{diggle:13}'s bandwidth while the bottom ones are obtained with \cite{cronie2018non}'s bandwidth. For both choices, we selected a 
significance level of $0.1$. 
We observe that the significant points tend to be similar in both cases, therefore the choice of bandwidth (selection method) does not seem to be crucial. We note that such bandwidth-induced differences were missing in the previously run simulation study. We attribute this sensitivity of the procedure to the shapes of the functional marks, that  are obviously more variable, if compared to the simulated ones.

Nevertheless, both bandwidths lead to significant events belonging to important well known Italian seismic sequences. Of course these sequences are likely generated by different underlying processes, giving rise to long-term and highly correlated aftershocks.

\section{Conclusions}
\label{sec:conc}

In this work, we have proposed a general form for local summary statistics for marked point processes, which has been exploited to define the family of local inhomogeneous mark-weighted summary statistics for spatial point processes with functional marks, i.e. Functional Marked Point Processes (FMPP). We have employed such local summary statistics to construct a local test for random labelling, that is, to identify points, as well as regions, where this hypothesis does not hold.

More specifically, we first introduce a general local function for marked point patterns. 
With this specification, we are able to show that this function may be exploited to generate most summary statistics established in the literature.
With particular reference to the functional marked context, we define the family of local $t$-weighted marked $n$-th order inhomogeneous summary statistics based on the $K$-function, which is a local contribution to a global summary statistic estimator. We obtain a result for the expectation of the general local summary statistic and exploit it to derive an expression for the expectation of 
our $t$-weighted local statistics.

Having access to these tools, we have proposed a local test of random labelling, resorting to the second order version of our proposed local estimator, obtaining a local test useful for identifying specific regions where a global test would not detect atypical behaviour of the points.

To study the performance of the test in terms of classification rates, we have conducted a simulation study, considering a number of scenarios with different ground processes and structures for the functional marks. Such simulations have shown that in many settings, the local test performs well in identifying points of a pattern where the hypothesis of random labelling is not verified.

We can draw a number of future work paths.
Nevertheless, the local functions proposed in this paper can be considered as a very informative synthesis of the local second order behavior, useful for characterising the study area by an extended marked model, based on the FMPP theory. Incorporating local characteristics as functional marks would become part of the so called 
\textit{Constructed functional marks} (CFMs), which are marks reflecting the geometries of point configurations in neighbourhoods of the individual points. 

Concerning the application to seismic data, we aim at including also auxiliary (non-functional) marks into the analysis. These could contain synthetic information about the waveforms, such as the arrival times of the seismic event, or the inter-time between the two. 
The achievement of the unification of earthquake data and the FMPP theory would result in building a framework where it would be possible to exploit the available information of the seismic point process altogether.

A final comment concerns the possible extension of this paper's tools to spatio-temporal ground processes, which of course are of importance for processes which typically exhibit spatio-temporal interactions, such as the seismic one.

\end{document}